\documentclass[journal]{IEEEtran}

\ifCLASSINFOpdf
\usepackage[pdftex]{graphicx}
\DeclareGraphicsExtensions{.pdf,.jpeg,.png}
\else
\usepackage[dvips]{graphicx}
\fi
\usepackage{epstopdf}
\usepackage{amssymb,amsmath,amsthm}
\usepackage{amssymb}
\usepackage{wasysym}
\interdisplaylinepenalty=2500
\usepackage{algorithm}
\usepackage{algorithmic}
\usepackage{array}
\usepackage{cite}
\usepackage{color}
\usepackage{url}

\usepackage{epsfig,latexsym}
\usepackage{flushend}
\usepackage{verbatim}
\usepackage{amsopn}
\usepackage{booktabs}

\usepackage{stfloats}
\usepackage{enumerate}
\usepackage{hyperref}
\usepackage{subfigure}
\usepackage{caption}
\captionsetup[algorithm]{labelsep=colon}
\usepackage{bm}
\pdfstringdefDisableCommands{\let\bm=\relax}

\newtheorem{theorem}{Theorem}

\begin{document}
	
	\title{Movable Antenna Enhanced Wireless Sensing Via Antenna Position Optimization}
	
	\author{{Wenyan Ma, \IEEEmembership{Graduate Student Member, IEEE}, Lipeng Zhu, \IEEEmembership{Member, IEEE}, and  Rui Zhang, \IEEEmembership{Fellow, IEEE}}
		\thanks{W. Ma and L. Zhu are with the Department of Electrical and Computer Engineering, National University of Singapore,
			Singapore 117583 (Email: {wenyan@u.nus.edu}, {zhulp@nus.edu.sg}).
			
			R. Zhang is with School of Science and Engineering, Shenzhen Research Institute of Big Data, The Chinese University of Hong Kong, Shenzhen, Guangdong 518172, China (e-mail: rzhang@cuhk.edu.cn). He is also with the Department of Electrical and Computer Engineering, National University of Singapore, Singapore 117583 (e-mail: elezhang@nus.edu.sg).
	}}
	\maketitle
	
	\begin{abstract}
	In this paper, we propose a new wireless sensing system equipped with the movable-antenna (MA) array, which can flexibly adjust the positions of antenna elements for improving the sensing performance over conventional antenna arrays with fixed-position antennas (FPAs). First, we show that the angle estimation performance in wireless sensing is fundamentally determined by the array geometry, where the Cramer-Rao bound (CRB) of the mean square error (MSE) for angle of arrival (AoA) estimation is derived as a function of the antennas' positions for both one-dimensional (1D) and two-dimensional (2D) MA arrays. Then, for the case of 1D MA array, we obtain a globally optimal solution for the MAs' positions in closed form to minimize the CRB of AoA estimation MSE. While in the case of 2D MA array, we aim to achieve the minimum of maximum (min-max) CRBs of estimation MSE for the two AoAs with respect to the horizontal and vertical axes, respectively. In particular, for the special case of circular antenna movement region, an optimal solution for the MAs' positions is derived under certain numbers of MAs and circle radii. Thereby, both the lower- and upper-bounds of the min-max CRB are obtained for the antenna movement region with arbitrary shapes. Moreover, we develop an efficient alternating optimization algorithm to obtain a locally optimal solution for MAs' positions by iteratively optimizing one between their horizontal and vertical coordinates with the other being fixed. Numerical results demonstrate that our proposed 1D/2D MA arrays can significantly decrease the CRB of AoA estimation MSE as well as the actual MSE compared to conventional uniform linear arrays (ULAs)/uniform planar arrays (UPAs) with different values of uniform inter-antenna spacing. Furthermore, it is revealed that the steering vectors of our designed 1D/2D MA arrays exhibit low correlation in the angular domain, thus effectively reducing the ambiguity of angle estimation.
		
	\end{abstract}
	\begin{IEEEkeywords}
		Wireless sensing, movable antenna (MA), antenna position optimization, angle estimation, Cramer-Rao bound (CRB).
	\end{IEEEkeywords}
	
	\section{Introduction}
	
	The forthcoming sixth-generation (6G) mobile communication systems are expected to support more location-aware applications, such as autonomous driving, robot navigation, and virtual reality \cite{jiang2021the,saad2020a,chowdhury20206g}. These applications demand for improved sensing capabilities of wireless networks, beyond the conventional  quality of service (QoS) requirement for data transmission rate and reliability. Consequently, there is a growing interest in the new paradigm of integrated sensing and communication (ISAC) in future wireless networks, which are able to provide both sensing and communication services by sharing the hardware and radio resources. It is expected that wireless sensing, which generally encompasses the detection, estimation, and/or extraction of physical information from environmental targets, will become a new major service offered by 6G wireless networks \cite{liu2022survey,shao2024intelligent}.
	
	To synthesis sensing beams with high angular resolution and beamforming gain, large-scale antenna arrays are usually required for sensing nodes such as radar and base station (BS) \cite{mailloux2005phased,wirth2005radar}. However, their hardware cost and power consumption also scale up with the increase of number of antennas, which brings a challenge in implementing low-cost and high-performance sensing systems. 
	To reduce the implementation cost, sparse antenna arrays were proposed to reduce the number of antennas with enlarged inter-antenna spacing for achieving a large portion of the angular resolution of large-scale antenna arrays \cite{greene1978sparse,roberts2011sparse}. However, sparse arrays usually adopt fixed-position antennas (FPAs), which cannot adapt to different sensing requirements in wireless networks and switch between the optimal array geometries for sensing and communication applications \cite{wang2023can,gazzah2009optimum}. Furthermore, the FPAs in both large-scale and sparse arrays  cannot fully exploit wireless signal variation in  a given region where the sensing transmitter/receiver is located.
	
	To overcome the above limitations of wireless sensing with FPA arrays, we propose in this paper a new sensing system equipped with the movable-antenna (MA) array \cite{zhu2023movablemagzine}, which can adjust the positions of antenna elements at the sensing transmitter/receiver (also known as fluid antenna system (FAS) \cite{zhu2024historical}). Compared to conventional FPA arrays, the new degree of freedom (DoF) of MA arrays via antenna position optimization has a potential to significantly improve the sensing performance under the same number of antennas, explained as follows. First, the array aperture can be effectively increased by enlarging the antenna movement region, which helps achieve a higher resolution for angle estimation. Second, the geometry of an MA array can be optimized to decrease the correlation between steering vectors over different directions, thus reducing the ambiguity and undesired interference for angle estimation. Third, real-time adjustment of MAs' positions can adapt to time-varying environment conditions and different sensing task requirements. Moreover, the reconfiguration of MAs can support flexible functionality switching between communication and sensing in ISAC systems. In practice, the geometry of MA array can be pre-configured based on the sensing application or adjusted in real time to cater to dynamic ISAC performance requirements.
	
	Preliminary studies have demonstrated the promising benefits of MAs for wireless sensing. The authors in \cite{zhuravlev2015experi} showed a prototype of MA-aided multi-static radar, where the transmit and receive antennas can be moved in a line segment with  linear drives, such that the antenna configuration can be fine-tuned to balance the trade-off between system complexity and radar imaging quality.
	In \cite{hinske2008using}, an antenna was attached on a sliding carriage to measure the distance between it and the target at various antenna positions. The target's location was then determined by examining the overlapping regions of the coverage areas centered at these positions. However, the above studies only demonstrated that MAs can enhance sensing performance by exploiting the antenna position adjustment, while the fundamental relationship between MAs' positions and sensing performance was not revealed therein. 
	
	Additionally, existing studies have extensively explored the advantages of MAs for wireless communications. For instance, in \cite{zhu2022modeling,mei2024movable,zhu2024performance}, it was demonstrated that employing MAs can efficiently enhance the received signal-to-noise ratio (SNR) based on deterministic or stochastic channel models. In \cite{zhu2023movable,xiao2023multiuser,wu2023movable,qin2024antenna,cheng2023sum,yang2024flexible}, MA-aided multiuser communication was studied under various system configurations, where the MA position optimization helps reshape multiuser channels to save total transmit power or improve the sum-rate performance of users. In \cite{ma2022mimo,chen2023joint}, MA-aided multiple-input multiple-output (MIMO) systems were studied based on instantaneous or statistical channel state information (CSI), revealing that the joint optimization of the MAs' positions at the transmitter and receiver can significantly increase MIMO system capacity. In \cite{ma2023compressed,xiao2024channel}, channel estimation for MA systems was investigated to reconstruct the channel response between any transmit antenna and receive antenna locations by sequentially estimating the angles
	of departure (AoDs), angles of arrival (AoAs), and 	complex coefficients of multiple channel paths, where the MAs' positions were optimized to improve the AoA/AoD estimation performance by compressed sensing. Furthermore, prior works \cite{zhu2023movablebeam,ma2024multi} have demonstrated the efficacy of MA arrays in interference nulling and multi-beamforming. The studies in \cite{hu2024secure,tang2024secure} have validated the significant potential of MAs in enhancing the physical-layer security of wireless communication systems. Moreover, a new six-dimensional (6D) MA system was proposed in \cite{shao20246d,shao20246d2} to improve the long-term network capacity by jointly designing the positions and rotations of multiple antenna surfaces at the BS based on the users' spatial distribution.

	Despite the above works on MA-aided wireless sensing and communications, the fundamental performance of MA-aided wireless sensing has not been characterized in the existing literature yet. This thus motivates this paper, where we propose a new wireless sensing system equipped with the MA array, which leverages the new DoF via antenna position optimization for improving the sensing performance. To this end, we first show that the angle estimation performance in wireless sensing is crucially dependent on the array geometry, where the Cramer-Rao bound (CRB) of AoA estimation mean square error (MSE) is derived as a function of the antennas' positions for both one-dimensional (1D) and two-dimensional (2D) MA arrays. Based on these results, the main contributions of this paper are summarized as follows:
	
	\begin{itemize}
		\item First, for the case of 1D MA array, we reveal that the CRB of AoA estimation MSE is inversely proportional to the variance of the MAs' positions. Inspired by this, an optimization problem is formulated to minimize the CRB by optimizing the antenna positioning vector (APV), which constitutes all MAs' positions. Although the formulated optimization problem is non-convex, we obtain a globally optimal solution for the APV in closed form.
		\item Next, for the case of 2D MA array, we derive the CRB of AoA estimation MSE and formulate the CRB minimization problem by optimizing the 2D MAs' positions. Specifically, we aim to achieve the minimum of maximum (min-max) CRBs of estimation MSE for the two AoAs with respect to the horizontal and vertical axes, respectively. In particular, for the special case of circular antenna movement region, an optimal solution for the MAs' positions is derived under certain numbers of MAs and circle radii. Thereby, both the lower- and upper-bounds of the min-max CRB are obtained for the antenna movement region with arbitrary shapes. Moreover, we develop an efficient alternating optimization algorithm to obtain a locally optimal solution for MAs' positions by iteratively optimizing one between their horizontal and vertical coordinates with the other being fixed.
		\item Finally, we present extensive numerical results to demonstrate the decrease of AoA estimation MSE by MA arrays compared to conventional uniform linear arrays (ULAs) or uniform planar arrays (UPAs) with different values of uniform inter-antenna spacing, and shed light on how the designed antennas' positions can achieve such performance gains. In particular, it is revealed that the steering vectors of our designed 1D/2D MA arrays exhibit low correlation in the angular domain, thus effectively reducing the ambiguity of angle estimation.
	\end{itemize}
	
	The remainder of this paper is organized as follows. Section II and Section III describe the system model and analyze the CRB of AoA estimation MSE for 1D and 2D MA arrays, respectively. In Section IV, we propose an alternating optimization algorithm to solve the antenna position optimization problem for 2D MA arrays. Section V provides numerical results and pertinent discussions. Finally, we conclude this paper in Section VI.

	\textit{Notations}: Symbols representing vectors (lower case) and matrices (upper case) are in boldface. The conjugate, transpose, and conjugate transpose operations are represented by $(\cdot)^{\mathsf *} $, $(\cdot)^{\mathsf T} $, and $(\cdot)^{\mathsf H} $, respectively. The sets of $P\times{Q}$ dimensional complex and real matrices are denoted by $\mathbb{C}^{P\times{Q}}$ and $\mathbb{R}^{P\times{Q}}$, respectively. The set of positive integers is denoted by $\mathbb{Z}^{+}$. $\bm{a}[p]$ denotes the $p$th entry of vector $\bm{a}$. $\bm{I}_N$ and $\bm{1}_N$ denote the $N$-dimensional identity matrix and the $N$-dimensional column vector with all elements equal to $1$, respectively. $\mathcal{A}\setminus\mathcal{B}$ denotes the subtraction of set $\mathcal{B}$ from set $\mathcal{A}$. $\mathcal{A}\subset\mathcal{B}$ represents that $\mathcal{A}$ is a subset of $\mathcal{B}$. The ceiling and floor of real number $a$ are denoted by $\lceil a \rceil$ and $\lfloor a \rfloor$, respectively. The $2$-norm of vector $\bm{a}$ is denoted by $\|\bm{a}\|_2$.
	
	\section{CRB Characterization for 1D MA Array}
	
	\subsection{System Model}
	We consider a wireless sensing system with $N$ MAs to estimate a target's AoA as shown in Fig.~\ref{system}. With the aid of driver components, the MAs' positions can be adjusted flexibly within the given 1D line segment of length $A$. Let $x_n\in[0,A]$ represent the position of the $n$th MA ($n=1,2,\ldots,N$), and the APV of all $N$ MAs is denoted by $\bm{x}\triangleq[x_1,x_2,\ldots,x_N]^{\mathsf T}\in \mathbb{R}^{N}$, with
	$0\leq x_1 < x_2<\ldots<x_N\leq A$ without loss of generality.
	
	\begin{figure}[!t]
		\centering
		\includegraphics[width=70mm]{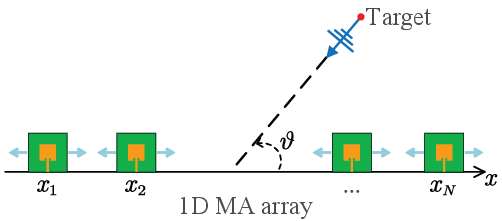}
		\caption{The 1D MA array for target angle estimation.}
		\label{system}
	\end{figure}
	
	To estimate the AoA of the target, the receiver consecutively receives the echoes reflected by the target over $T$ snapshots\footnote{The echoes are reflections of the probing signals transmitted by the sensing transmitter, which can be either co-located  or separately deployed with the sensing receiver.}. We assume that the target-receiver channel is line-of-sight (LoS), which remains static over $T$ snapshots. Since the distance between the target and receiver is typically much larger than the size of the region for antenna movement, we consider the far-field channel model from the target to receiver \cite{zhu2022modeling,zhu2023movable}. As shown in Fig.~\ref{system}, the physical steering angle of the target-receiver LoS path is denoted by $\vartheta \in [0, \pi]$. For convenience, the spatial AoA is defined as
	$u\triangleq\cos \vartheta$. The steering vector of the MA array can thus be expressed as a function of the APV $\bm{x}$ and the spatial AoA $u$, i.e.,
	\begin{equation}\label{g}
		\bm{\alpha}(\bm{x},u) \triangleq \left[ e^{j\frac{2\pi}{\lambda}x_1 u}, e^{j\frac{2\pi}{\lambda}x_2 u}, \ldots, e^{j\frac{2\pi}{\lambda}x_N u} \right]^{\mathsf T} \in{\mathbb{C}^{N}},
	\end{equation}	
	where $\lambda$ is the wavelength. Furthermore, let $\beta$ denote the complex path coefficient from the target to the origin of the line segment. As a result, the target-receiver LoS channel vector is given by
	\begin{equation}\label{H}
		\bm{h}(\bm{x},u)=\beta\bm{\alpha}(\bm{x},u).
	\end{equation}

	\subsection{AoA Estimation}
	For any given APV $\bm{x}$, we adopt the multiple signal classification (MUSIC) algorithm for
	estimating the spatial AoA $u$ of the target. Specifically, the received signal at the $t$th ($t=1,2,\ldots,T$) snapshot is given by
	\begin{equation}
		\bm{y}_t=\bm{h}(\bm{x},u)s_t + \bm{z}_t,
	\end{equation}
	where $s_t$ denotes the reflected signal from the target with average power $\mathbb{E}\{|s_t|^2\} = P$. $\bm{z}_t \sim \mathcal{CN}(0,\sigma^2\bm{I}_N)$ represents the additive white Gaussian noise (AWGN) vector at the receiver, which is assumed to be circularly symmetric complex Gaussian (CSCG) distributed with zero mean and covariance matrix $\sigma^2\bm{I}_N$, where $\sigma^2$ represents the average noise power.
	
	To estimate the spatial AoA $u$, the received signals over $T$ snapshots are stacked into the following matrix as
	\begin{equation}\label{Y}
		\bm{Y}\triangleq[\bm{y}_1,\bm{y}_2,\ldots,\bm{y}_T]=\bm{h}(\bm{x},u)\bm{s}^{\mathsf T} + \bm{Z},
	\end{equation}
	where $\bm{s}\triangleq[s_1,s_2,\ldots,s_T]^{\mathsf T}\in \mathbb{C}^{T}$ and $\bm{Z}\triangleq[\bm{z}_1,\bm{z}_2,\ldots,\bm{z}_T]\in \mathbb{C}^{N \times T}$. Then, the covariance matrix of $\bm{Y}$ can be written as
	\begin{equation}
		\bm{R}_{\bm{Y}} = \frac{1}{T}\bm{Y}\bm{Y}^{\mathsf H} = \frac{1}{T}\bm{h}(\bm{x},u)\bm{s}^{\mathsf T}\bm{s}^{\mathsf *}\bm{h}(\bm{x},u)^{\mathsf H} + \sigma^2\bm{I}_N.
	\end{equation}
	Based on the procedures of the MUSIC algorithm, we can obtain the singular value decomposition (SVD) of $\bm{R}_{\bm{Y}}$ as
	\begin{equation}
		\bm{R}_{\bm{Y}} = \begin{bmatrix}
			\bm{u}_s, \bm{U}_z
		\end{bmatrix} \begin{bmatrix}
		\bm{\gamma}_s & \\ & \bm{\Gamma}_z
	\end{bmatrix} \begin{bmatrix}
	\bm{u}_s^{\mathsf H} \\ \bm{U}_z^{\mathsf H}
	\end{bmatrix},
	\end{equation}
	where $\bm{u}_s\in \mathbb{C}^{N}$ and $\bm{U}_z\in \mathbb{C}^{N\times (N-1)}$ are the singular vector(s) of the signal and noise subspaces, respectively, $\bm{\gamma}_s$ is the singular value of the signal subspace, and $\bm{\Gamma}_z\in \mathbb{R}^{(N-1)\times (N-1)}$ is a diagonal matrix with the singular values of the noise subspace on the diagonal. Since $\bm{\alpha}(\bm{x},u)$ is orthogonal to $\bm{U}_z$, and $\bm{\alpha}(\bm{x},\tilde{u})$ with $\tilde{u}\neq u$ is non-orthogonal to $\bm{U}_z$, i.e., $\bm{\alpha}(\bm{x},u)^{\mathsf H}\bm{U}_z\bm{U}_z^{\mathsf H}\bm{\alpha}(\bm{x},u) = 0$ and $\bm{\alpha}(\bm{x},\tilde{u})^{\mathsf H}\bm{U}_z\bm{U}_z^{\mathsf H}\bm{\alpha}(\bm{x},\tilde{u}) \neq 0$, there is a peak for the function $f(\bar{u})\triangleq\frac{1}{\bm{\alpha}(\bm{x},\bar{u})^{\mathsf H}\bm{U}_z\bm{U}_z^{\mathsf H}\bm{\alpha}(\bm{x},\bar{u})}$ at $\bar{u}=u$. Then, we can obtain the estimation of $u$ as \cite{shao2022target}
	\begin{equation}\label{MUSIC1D}
		\hat{u} = \arg\max_{\bar{u}\in [-1,1]} \frac{1}{\bm{\alpha}(\bm{x},\bar{u})^{\mathsf H}\bm{U}_z\bm{U}_z^{\mathsf H}\bm{\alpha}(\bm{x},\bar{u})},
	\end{equation}
	which can be solved by exhaustively searching for $\bar{u}$ over the interval $[-1,1]$. Then, the AoA estimation MSE can be expressed as
	\begin{align}
		{\rm{MSE}}(u)\triangleq\mathbb{E}\{|u-\hat{u}|^2\}.
	\end{align}
	Thus, the lower-bound of ${\rm{MSE}}(u)$, i.e., the CRB, is given by \cite{kay1993fundamentals}
	\begin{align}\label{CRB1D}
		 {\rm{MSE}}(u)\geq{\rm{CRB}}_u(\bm{x}) =  \frac{\sigma^2\lambda^2}{8\pi^2TPN|\beta|^2}\frac{1}{{\rm{var}}(\bm{x})},
	\end{align}
	where the variance function is defined as ${\rm{var}}(\bm{x})\triangleq\frac{1}{N}\sum_{n=1}^{N}x_n^2 - \mu(\bm{x})^2$ with $\mu(\bm{x})=\frac{1}{N}\sum_{n=1}^{N}x_n$ being the mean of vector $\bm{x}$. The result in \eqref{CRB1D} indicates that the CRB of AoA estimation MSE is dependent on the APV $\bm{x}$, i.e., ${\rm{CRB}}_u(\bm{x})$ decreases with the increase of ${\rm{var}}(\bm{x})$. Therefore, we can optimize APV $\bm{x}$ to maximize ${\rm{var}}(\bm{x})$ such that ${\rm{CRB}}_u(\bm{x})$ is minimized. Intuitively, in order to maximize ${\rm{var}}(\bm{x})$, the $N$ MAs should be spread apart as much as possible to increase the variation in $\bm{x}$, which yields an increased array aperture for synthesizing sensing beams with higher angular resolution.

	\subsection{CRB Minimization}
	In this subsection, we aim to minimize ${\rm{CRB}}_u(\bm{x})$ by optimizing $\bm{x}$. Based on the analysis of \eqref{CRB1D}, the objective function can be simplified as
	\begin{align}
		\min_{\bm{x}}{\rm{CRB}}_u(\bm{x}) \iff \max_{\bm{x}}{\rm{var}}(\bm{x}).
	\end{align}
	In order to mitigate the coupling among MAs at the receiver, a minimum separation distance $D$ is necessary for every antenna pair, i.e., $x_n-x_{n-1} \geq D$, $n=2,3,\ldots,N$. Then, the optimization problem for the APV of the MA array in the 1D line segment can be formulated as
	\begin{subequations}
		\begin{align}
			\textrm {(P1)}~~\max_{\bm{x}} \quad & {\rm{var}}(\bm{x})   \label{P1a}\\
			\text{s.t.} \quad & x_1\geq0, x_N\leq A, \label{P1b}\\
			& x_n-x_{n-1} \geq D,~~ n = 2,3,\ldots,N. \label{P1c}
		\end{align}
	\end{subequations}
	
	Note that an implicit condition for the feasibility of problem (P1) is $A\geq (N-1)D$ such that the minimum distance constraint \eqref{P1c} can be satisfied. Let $\bm{x}^\star\in\mathbb{R}^{N}$ denote an optimal solution for problem (P1). Although problem (P1) is a non-convex optimization problem due to the non-convexity of the objective function \eqref{P1a}, we can obtain $\bm{x}^\star$ by the following theorem.
	\begin{theorem}
		The optimal APV for minimizing the CRB of AoA estimation MSE in the 1D line segment is given by
		\begin{align}\label{1Doptimal}
			&x^\star_n=\left\{
			\begin{array}{ll}
				(n-1)D,    & n=1,2,\ldots,\lfloor N/2 \rfloor;\\
				A-(N-n)D,  & n=\lfloor N/2 \rfloor+1,\ldots,N.
			\end{array} \right.
		\end{align}
		The corresponding minimum CRB, given by \eqref{PAND} and \eqref{50}, decreases with $A$ in the order of $\mathcal{O}(A^{-2})$.
	\end{theorem}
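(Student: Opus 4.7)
The plan is to exploit convexity: ${\rm{var}}(\bm{x})$ is convex in $\bm{x}$, so its maximum over the feasible polytope of (P1) is attained at an extreme point, and (P1) reduces to enumerating the finitely many vertices.

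Since ${\rm{var}}(\bm{x}) = \frac{1}{N}\bm{x}^{\mathsf T}\bm{x} - \frac{1}{N^2}\bm{x}^{\mathsf T}\bm{1}_N\bm{1}_N^{\mathsf T}\bm{x}$, its Hessian $\frac{2}{N}(\bm{I}_N - \frac{1}{N}\bm{1}_N\bm{1}_N^{\mathsf T})$ is positive semidefinite, confirming the convexity. The feasible region is the polytope cut out by the $N+1$ inequality constraints in \eqref{P1b} and \eqref{P1c}. The only linear dependence among their $N+1$ gradients is that they sum to zero with unit coefficients, so any $N$-subset is linearly independent and each vertex is obtained by relaxing exactly one constraint. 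Relaxing the spacing constraint $x_k - x_{k-1} \geq D$ produces a two-cluster APV with antennas $1,\ldots,k-1$ packed at $0,D,\ldots,(k-2)D$ and antennas $k,\ldots,N$ packed at $A-(N-k)D,\ldots,A$; relaxing $x_1 \geq 0$ or $x_N \leq A$ produces the degenerate all-right or all-left vertices. Parameterizing by the left-cluster size $m \in \{0,1,\ldots,N\}$ gives $N+1$ candidate APVs, with \eqref{1Doptimal} corresponding to $m = \lfloor N/2 \rfloor$.

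Let $g(m) \triangleq {\rm{var}}(\bm{x}^{(m)})$. Using the identity ${\rm{var}}(\bm{x}) = \frac{1}{2N^2}\sum_{i,j}(x_i - x_j)^2$ and splitting into intra-cluster and inter-cluster pair contributions, the inter-cluster part evaluates to
\begin{equation*}
\sum_{a=0}^{m-1}\sum_{b=0}^{N-m-1}(A - (a+b)D)^2 = m(N-m)A^2 - m(N-m)(N-2)AD + \mathcal{O}(D^2),
\end{equation*}
while the intra-cluster parts are explicit polynomials in $m,N,D$ obtained from the variance of an arithmetic progression. The reflection $x_n \mapsto A - x_{N+1-n}$ gives the symmetry $g(m) = g(N-m)$, so it suffices to show that $g$ is strictly increasing on $\{0,\ldots,\lfloor N/2 \rfloor\}$ whenever $A > (N-1)D$. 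The $A^2$-coefficient $m(N-m)/N^2$ of $g(m)$ is strictly increasing for $m < N/2$, and a direct algebraic bound shows the $AD$ and $D^2$ terms cannot overturn this under $A \geq (N-1)D$, yielding $g(m+1) > g(m)$. Hence the maximizers are $m = \lfloor N/2 \rfloor$ and its symmetric partner $\lceil N/2 \rceil$, both giving APVs equivalent to \eqref{1Doptimal}.

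Substituting the closed form of $g(\lfloor N/2 \rfloor)$ into \eqref{CRB1D} yields the minimum CRB, and since the leading $A^2$-coefficient $\lfloor N/2 \rfloor \lceil N/2 \rceil / N^2$ is a strictly positive constant, ${\rm{var}}(\bm{x}^\star) = \Theta(A^2)$, whence ${\rm{CRB}}_u(\bm{x}^\star) = \mathcal{O}(A^{-2})$. The main obstacle is verifying the monotonicity of $g(m)$: the leading $A^2$-term immediately suggests the optimum, but one must carefully bound the lower-order $AD$ and $D^2$ terms to ensure they do not invert the ordering near the feasibility threshold $A = (N-1)D$, where the polytope in fact degenerates to a single ULA and all vertices coincide.
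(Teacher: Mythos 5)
Your proof is correct, but it takes a genuinely different route from the paper's. The paper proves optimality by a sequential exchange argument: starting from an arbitrary feasible $\bm{x}$, it moves one antenna at a time to its target position in \eqref{1Doptimal}, verifying at each of the $N$ steps that feasibility is preserved and that ${\rm{var}}(\cdot)$ does not decrease. You instead observe that ${\rm{var}}(\bm{x})$ is convex, so its maximum over the feasible polytope of (P1) is attained at a vertex, enumerate the $N+1$ vertices (indexed by the left-cluster size $m$, valid because the only linear dependence among the $N+1$ constraint gradients of \eqref{P1b}--\eqref{P1c} involves all of them), and reduce the problem to a one-dimensional comparison of $g(m)={\rm{var}}(\bm{x}^{(m)})$. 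Your approach buys a cleaner global structure (finite candidate set, reflection symmetry $g(m)=g(N-m)$, and an immediate explanation of why two-cluster configurations are the only candidates), whereas the paper's exchange argument is more elementary but requires a delicate induction to keep each intermediate configuration feasible. The one step you assert rather than prove --- that the $AD$ and $D^2$ terms cannot overturn the ordering induced by the $A^2$ coefficient $m(N-m)/N^2$ --- is true and in fact closes without any case analysis: writing $E\triangleq A-(N-1)D\geq 0$ and $x_i=(i-1)D+\epsilon_i$ with $\epsilon_i\in\{0,E\}$ according to cluster membership, the pairwise-difference identity gives
\begin{align*}
N^2\,{\rm{var}}\big(\bm{x}^{(m)}\big)=D^2\sum_{i<j}(j-i)^2+m(N-m)\,E\,(ND+E),
\end{align*}
where the first term is independent of $m$ and the second is a positive multiple of $m(N-m)$ whenever $A>(N-1)D$; this makes the maximizers $m=\lfloor N/2\rfloor$ and $m=\lceil N/2\rceil$ immediate and also yields the $\Theta(A^2)$ growth of ${\rm{var}}(\bm{x}^\star)$, hence the $\mathcal{O}(A^{-2})$ decay of the CRB, consistent with \eqref{PAND} and \eqref{50}. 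With that identity supplied, your argument is complete and arguably shorter than the paper's.
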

	\begin{proof}
		See Appendix A.
	\end{proof}
	Theorem 1 demonstrates that to minimize the CRB of AoA estimation MSE in the 1D line segment, MAs should be partitioned into two groups as shown in Fig.~\ref{optimal_position_1D}. Specifically, half of the MAs are positioned in the leftmost extremity of the 1D line segment, while the remaining half are positioned in the rightmost extremity of the 1D line segment, both with the minimum inter-antenna spacing, $D$. Moreover, the CRB can be efficiently reduced by increasing the length of the line segment for antenna movement with a given number of MAs.
	
	\begin{figure}[!t]
		\centering
		\includegraphics[width=90mm]{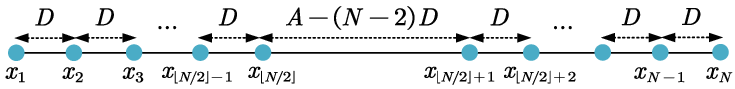}
		\caption{Illustration of the optimal MAs' positions for the 1D MA array.}
		\label{optimal_position_1D}
	\end{figure}
	
	\begin{figure}[!t]
		\centering
		\includegraphics[width=70mm]{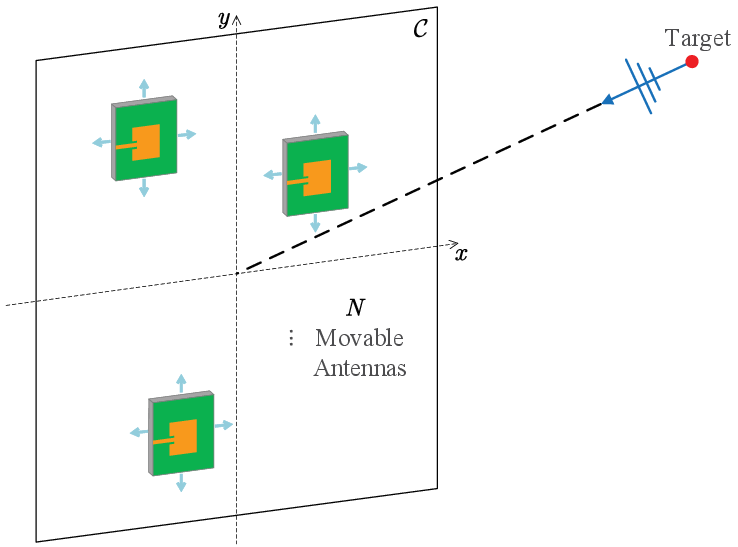}
		\caption{The 2D MA array for target angle estimation.}
		\label{system2}
	\end{figure}
	
	\begin{figure}[!t]
		\centering
		\includegraphics[width=85mm]{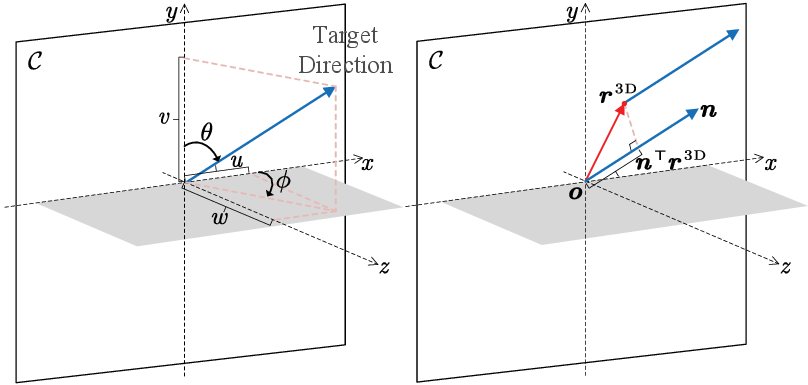}
		\caption{Illustration of the AoAs and geometrical parameters for the 2D MA array.}
		\label{AOA}
	\end{figure}

	\section{CRB Characterization for 2D MA Array}
	\subsection{System Model}
	As shown in Fig.~\ref{system2}, we consider in this section a wireless sensing system with $N$ MAs deployed on a 2D plane to estimate the target's AoAs with respect to $x$ and $y$ axes, respectively. The 2D region for antenna movement is assumed to be continuous and denoted by $\mathcal{C}$, where the 2D coordinate of the $n$th ($n=1,2,\ldots,N$) MA is denoted as $\bm{r}_n=[x_n,y_n]^{\mathsf T}\in\mathcal{C}$. Denote the collection of $N$ MAs' positions by $\tilde{\bm{r}}=\left[\bm{r}_1, \bm{r}_2, \ldots, \bm{r}_N\right] \in{\mathbb{R}^{2\times{N}}}$. As shown in Fig.~\ref{AOA}, the physical elevation and azimuth AoAs of the target-receiver LoS path are denoted by $\theta \in [0, \pi]$ and $\phi \in [0, \pi]$, respectively. For convenience, the spatial AoAs are defined as
	\begin{equation}
		u\triangleq\sin \theta \cos \phi, v\triangleq\cos \theta, w\triangleq\sin \theta \sin \phi.
	\end{equation}
	Then, the normalized wave vector of the target-receiver LoS path is expressed as $\bm{n} = [u, v, w]^{\mathsf T}$. Denote the three-dimensional (3D) coordinate vector of the origin of $\mathcal{C}$ and the MA by $\bm{o} = [0, 0, 0]^{\mathsf T}$ and $\bm{r}^{\rm 3D}=[x, y, 0]^{\mathsf T}$, respectively. Thus, the difference in signal propagation distance between MA's position $\bm{r}^{\rm 3D}$ and the reference point $\bm{o}$ can be represented as
	\begin{equation}\label{rho}
		\tau(\bm{r}) = \bm{n}^{\mathsf T}(\bm{r}^{\rm 3D}-\bm{o}) = x u + y v.
	\end{equation}
	Thus, the phase difference of the target-receiver LoS channel path between $\bm{r}^{\rm 3D}$ and $\bm{o}$ is given by $2\pi\tau(\bm{r})/\lambda$. The steering vector of the 2D MA array can thus be expressed as a function of the MAs' positions $\tilde{\bm{r}}$ and the two spatial AoAs $\bm{\eta}\triangleq[u,v]^{\mathsf T}$, i.e.,
	\begin{equation}\label{g2}
		\bm{\alpha}(\tilde{\bm{r}},\bm{\eta}) \triangleq \left[ e^{j\frac{2\pi}{\lambda}\tau(\bm{r}_1)}, e^{j\frac{2\pi}{\lambda}\tau(\bm{r}_2)}, \ldots, e^{j\frac{2\pi}{\lambda}\tau(\bm{r}_N)} \right]^{\mathsf T} \in{\mathbb{C}^{N}}.
	\end{equation}	
	As a result, the target-receiver LoS channel is given by
	\begin{equation}\label{H2}
		\bm{h}(\tilde{\bm{r}},\bm{\eta})=\beta\bm{\alpha}(\tilde{\bm{r}},\bm{\eta}).
	\end{equation}

	\subsection{AoA Estimation and CRB Minimization}
	Similar to the procedure for AoA estimation in the 1D MA array case, the joint estimation of $u$ and $v$ based on the MUSIC algorithm is given by \cite{shao2022target}
	\begin{equation}\label{MUSIC2D}
		\hat{\bm{\eta}} = \arg\max_{\bar{\bm{\eta}}\in[-1,1]\times[-1,1]} \frac{1}{\bm{\alpha}(\tilde{\bm{r}},\bar{\bm{\eta}})^{\mathsf H}\bm{U}_z\bm{U}_z^{\mathsf H}\bm{\alpha}(\tilde{\bm{r}},\bar{\bm{\eta}})},
	\end{equation}
	which can be solved by exhaustively searching for $\bar{\bm{\eta}}=[\bar{u},\bar{v}]^{\mathsf T}$ over the interval $[-1,1]\times[-1,1]$. Accordingly, for the MA-aided sensing system with 2D antenna movement region $\mathcal{C}$, the CRB of AoA estimation MSE is given by \cite{kay1993fundamentals}
	\begin{align}\label{CRBr}
		{\rm{MSE}}(u)\geq{\rm{CRB}}_u(\tilde{\bm{r}}) &=  \frac{\sigma^2\lambda^2}{8\pi^2TPN|\beta|^2}\frac{1}{{\rm{var}}(\bm{x})-\frac{{\rm{cov}}(\bm{x},\bm{y})^2}{{\rm{var}}(\bm{y})}}, \notag\\
		{\rm{MSE}}(v)\geq{\rm{CRB}}_v(\tilde{\bm{r}}) &=  \frac{\sigma^2\lambda^2}{8\pi^2TPN|\beta|^2}\frac{1}{{\rm{var}}(\bm{y})-\frac{{\rm{cov}}(\bm{x},\bm{y})^2}{{\rm{var}}(\bm{x})}},
	\end{align}
	where $\bm{x}\triangleq[x_1,x_2,\ldots,x_N]^{\mathsf T}\in \mathbb{R}^{N}$ and $\bm{y}\triangleq[y_1,y_2,\ldots,y_N]^{\mathsf T}\in \mathbb{R}^{N}$. The covariance function is defined as ${\rm{cov}}(\bm{x},\bm{y})\triangleq \frac{1}{N}\sum_{n=1}^{N}x_n y_n - \mu(\bm{x})\mu(\bm{y})$.
	
	The results in \eqref{CRBr} indicate that the CRB of AoA estimation MSE is dependent on the MAs' positions, which influence both ${\rm{CRB}}_u(\tilde{\bm{r}})$ and ${\rm{CRB}}_v(\tilde{\bm{r}})$. Specifically, ${\rm{CRB}}_u(\tilde{\bm{r}})$ and ${\rm{CRB}}_v(\tilde{\bm{r}})$ decrease with the increase of ${\rm{var}}(\bm{x})-\frac{{\rm{cov}}(\bm{x},\bm{y})^2}{{\rm{var}}(\bm{y})}$ and ${\rm{var}}(\bm{y})-\frac{{\rm{cov}}(\bm{x},\bm{y})^2}{{\rm{var}}(\bm{x})}$, respectively. Therefore, we can optimize MAs' positions $\tilde{\bm{r}}$ to jointly minimize ${\rm{CRB}}_u(\tilde{\bm{r}})$ and ${\rm{CRB}}_v(\tilde{\bm{r}})$. Intuitively, in order to maximize ${\rm{var}}(\bm{x})-\frac{{\rm{cov}}(\bm{x},\bm{y})^2}{{\rm{var}}(\bm{y})}$ and ${\rm{var}}(\bm{y})-\frac{{\rm{cov}}(\bm{x},\bm{y})^2}{{\rm{var}}(\bm{x})}$, i.e., by maximizing ${\rm{var}}(\bm{x})$ and ${\rm{var}}(\bm{y})$ while minimizing ${\rm{cov}}(\bm{x},\bm{y})$, MAs should be spread apart as much as possible in both the $x$ and $y$ directions to increase the variations in $\bm{x}$ and $\bm{y}$, while ensuring that $\bm{x}$ and $\bm{y}$ are symmetric with respect to $x$ and $y$ axes, respectively, thereby minimizing ${\rm{cov}}(\bm{x},\bm{y})$. However, there is in general a trade-off between minimizing ${\rm{CRB}}_u(\tilde{\bm{r}})$ and ${\rm{CRB}}_v(\tilde{\bm{r}})$ due to the coupling between ${\rm{var}}(\bm{x})$/${\rm{var}}(\bm{y})$ and ${\rm{cov}}(\bm{x},\bm{y})$.
	
	Then, we aim to minimize the maximum CRBs of estimation MSE for $u$ and $v$ (i.e., achieve the min-max CRB) by optimizing $\tilde{\bm{r}}$. Based on \eqref{CRBr}, the objective function can be simplified as
	\begin{align}
		&\min_{\tilde{\bm{r}}}\max  ~ [{\rm{CRB}}_u(\tilde{\bm{r}}), {\rm{CRB}}_v(\tilde{\bm{r}})] \iff \\
		& \max_{\tilde{\bm{r}}}\min  ~ \left[{\rm{var}}(\bm{x})-\frac{{\rm{cov}}(\bm{x},\bm{y})^2}{{\rm{var}}(\bm{y})}, {\rm{var}}(\bm{y})-\frac{{\rm{cov}}(\bm{x},\bm{y})^2}{{\rm{var}}(\bm{x})}\right].\notag
	\end{align}
	Thus, the antenna position optimization problem for the 2D MA array can be formulated as
	\begin{subequations}
		\begin{align}
			\textrm {(P2)}~~\max_{\tilde{\bm{r}},\delta} \quad & \delta \label{P2a}\\
			\text{s.t.} \quad & {\rm{var}}(\bm{x})-\frac{{\rm{cov}}(\bm{x},\bm{y})^2}{{\rm{var}}(\bm{y})} \geq \delta, \label{P2b}\\
			&{\rm{var}}(\bm{y})-\frac{{\rm{cov}}(\bm{x},\bm{y})^2}{{\rm{var}}(\bm{x})} \geq \delta, \label{P2c}\\
			& \tilde{\bm{r}} \in \mathcal{C}, \label{P2d}\\
			& \|\bm{r}_k-\bm{r}_l\|_2 \geq D,~~ k,l = 1,2,\ldots,N,~~ k\neq l,\label{P2e}
		\end{align}
	\end{subequations}
	where \eqref{P2e} is the minimum inter-antenna distance constraint to avoid antenna coupling. It is challenging to solve problem (P2) optimally because the fractional constraints \eqref{P2b} and \eqref{P2c} as well as the minimum inter-antenna distance constraints in \eqref{P2e} are non-convex with respect to $\tilde{\bm{r}}$. Moreover, the coupling between $\bm{x}$ and $\bm{y}$ exacerbates the difficulty of solving problem (P2).

	\begin{figure}[!t]
		\centering
		\includegraphics[width=70mm]{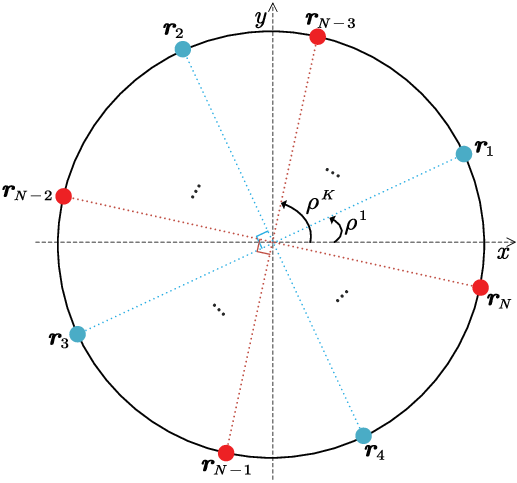}
		\caption{Illustration of the optimal MAs' positions in circular region $\mathcal{C}^\textrm{cir}$.}
		\label{optimal_position_circle}
	\end{figure}
	
	\subsection{Optimal Solution for Problem (P2) for Circular Region $\mathcal{C}^\textrm{cir}$}
	Although problem (P2) is difficult to solve for general antenna movement region $\mathcal{C}$, we show in this subsection that we can derive the optimal MAs' positions for the special case of circular region $\mathcal{C}^\textrm{cir}$ under certain numbers of MAs and circle radii. Specifically, we consider a circular region $\mathcal{C}^\textrm{cir}$ with its center at the coordinate origin and a radius of $A$, i.e., $\mathcal{C}^\textrm{cir}=\{(x,y)|x^2+y^2\leq A^2\}$. Then, we have the following theorem providing an  upper-bound for the objective value $\delta$ in problem (P2):
	\begin{theorem}
		For circular region $\mathcal{C}^\textrm{cir}$, the objective value $\delta$ in problem (P2) is upper-bounded by
		\begin{align}\label{21}
			\delta\leq\frac{A^2}{2},
		\end{align}
		and the corresponding lower-bound of the min-max CRB is
		\begin{align}
			\min_{\tilde{\bm{r}}}\max  ~ [{\rm{CRB}}_u(\tilde{\bm{r}}), {\rm{CRB}}_v(\tilde{\bm{r}})] \geq \frac{\sigma^2\lambda^2}{4\pi^2TPN|\beta|^2A^2},
		\end{align}
		which is tight when $D\leq 2A\sin(\pi/N)$ and $N=4K$, $K\in\mathbb{Z}^{+}$, with the corresponding optimal MAs' positions given by \eqref{xkyk}, \eqref{rho1}, and \eqref{rho2}.
	\end{theorem}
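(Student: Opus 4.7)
The plan is to (i) derive the upper bound $\delta\leq A^2/2$ from the disc constraint alone, (ii) translate it via \eqref{CRBr} into the stated min-max CRB lower bound, and (iii) exhibit an explicit configuration attaining equality whenever $N=4K$ and $D\leq 2A\sin(\pi/N)$.

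For step (i), the only ingredient is that every MA lies in $\mathcal{C}^{\rm cir}$, so $x_n^2+y_n^2\leq A^2$ for each $n$. Averaging over $n$ and using the identity $\frac{1}{N}\sum_{n}x_n^2={\rm{var}}(\bm{x})+\mu(\bm{x})^2$ (and the analogue for $\bm{y}$) yields
\begin{equation*}
{\rm{var}}(\bm{x})+{\rm{var}}(\bm{y})\leq A^2-\mu(\bm{x})^2-\mu(\bm{y})^2\leq A^2.
\end{equation*}
Because the fractional terms in \eqref{P2b}--\eqref{P2c} are non-negative, those constraints force $\delta\leq{\rm{var}}(\bm{x})$ and $\delta\leq{\rm{var}}(\bm{y})$, and adding them gives $2\delta\leq A^2$. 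Step (ii) is then immediate on substituting $\delta=A^2/2$ into \eqref{CRBr}.

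For step (iii), I would take the $N=4K$ antennas uniformly spaced on the boundary circle, $\bm{r}_n=A[\cos\phi_n,\sin\phi_n]^{\mathsf T}$ with $\phi_n=2\pi(n-1)/N$, and verify that this coincides with the parametrisation in the theorem statement. The chord between adjacent antennas has length $2A\sin(\pi/N)$, so constraint \eqref{P2e} is satisfied exactly under the hypothesis $D\leq 2A\sin(\pi/N)$. Standard roots-of-unity sums (valid for $N\geq 3$) give $\mu(\bm{x})=\mu(\bm{y})=0$ and ${\rm{var}}(\bm{x})={\rm{var}}(\bm{y})=A^2/2$, while the identity $\sin\phi_n\cos\phi_n=\tfrac{1}{2}\sin(2\phi_n)$, combined with the same geometric-sum argument, yields ${\rm{cov}}(\bm{x},\bm{y})=0$. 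Both left-hand sides of \eqref{P2b} and \eqref{P2c} therefore equal $A^2/2$, so $\delta=A^2/2$ is attained and the CRB lower bound is tight.

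The main obstacle, as I see it, is reconciling the divisibility hypothesis $N=4K$ with the fact that any $N\geq 3$ uniform-on-circle configuration already achieves $\delta=A^2/2$. My interpretation is that the theorem's parametrisation groups the antennas into $K$ quadruples related by the fourfold reflection symmetry across the $x$ and $y$ axes, which zeroes ${\rm{cov}}(\bm{x},\bm{y})$ and the two means by construction; the bulk of the work in this step would therefore be to expand that parametrisation and verify, quadruple by quadruple, that it reproduces the uniform arrangement and saturates the disc bound. A secondary subtlety is the tightness/rigidity argument: equality in ${\rm{var}}(\bm{x})+{\rm{var}}(\bm{y})\leq A^2$ forces $\mu(\bm{x})=\mu(\bm{y})=0$ and every MA to lie on the boundary circle, while equality in $2\delta\leq{\rm{var}}(\bm{x})+{\rm{var}}(\bm{y})$ forces ${\rm{var}}(\bm{x})={\rm{var}}(\bm{y})$ and ${\rm{cov}}(\bm{x},\bm{y})=0$, and only then does the spacing threshold $D\leq 2A\sin(\pi/N)$ become both necessary and sufficient for feasibility.
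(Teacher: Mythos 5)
Your upper-bound argument is the same as the paper's: drop the non-negative fractional terms so that $\delta\leq\min[{\rm{var}}(\bm{x}),{\rm{var}}(\bm{y})]\leq\frac{1}{2}({\rm{var}}(\bm{x})+{\rm{var}}(\bm{y}))\leq\frac{1}{2N}\sum_{n}(x_n^2+y_n^2)\leq A^2/2$, and the CRB lower bound follows by substitution into \eqref{CRBr}. The only divergence is in achievability: the paper exhibits a whole family of optimal configurations, namely $K$ quadruples of boundary points each related by successive $\pi/2$ rotations with free offsets $\rho^1<\cdots<\rho^K$ constrained by \eqref{rho2}, whereas you take the single uniform $N$-gon on the boundary, which is the instance $\rho^k=2\pi(k-1)/N$ of that family; either suffices to establish tightness, and your verification of the equality conditions $\mu(\bm{x})=\mu(\bm{y})=0$, ${\rm{var}}(\bm{x})={\rm{var}}(\bm{y})$, ${\rm{cov}}(\bm{x},\bm{y})=0$, $x_n^2+y_n^2=A^2$ matches the paper's \eqref{con}. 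Your side observation is also correct and worth keeping: the uniform $N$-gon satisfies those conditions for every $N\geq 3$ (the geometric sums $\sum_n e^{j2\phi_n}$ vanish whenever $e^{j4\pi/N}\neq 1$), so the hypothesis $N=4K$ is an artifact of the paper's quadruple-based parametrisation rather than a necessary condition for the bound to be attained.
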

	\begin{proof}
		See Appendix B.
	\end{proof}
	Theorem 2 demonstrates that to minimize the CRB, it is desired to deploy MAs on the border of the region to maximize the array aperture as shown in Fig.~\ref{optimal_position_circle}. In order to satisfy the minimum inter-antenna distance constraints
	in \eqref{P2e} for every antenna pair on the border of the circle, it requires $D\leq 2A\sin(\pi/N)$. Moreover, when $N=4K$, $K\in\mathbb{Z}^{+}$, we can design $\bm{x}$ and $\bm{y}$ that are symmetric with respect to $x$ and $y$ axes, respectively, and thus ${\rm{cov}}(\bm{x},\bm{y})=0$. However, as the number of MAs increases, some MAs need to be deployed inside $\mathcal{C}^\textrm{cir}$ in order to satisfy the minimum inter-antenna distance constraints
	in \eqref{P2e}. Furthermore, the lower-bound of the min-max CRB is inversely proportional to the region size, i.e., $\pi A^2$. Therefore, the CRB can be efficiently reduced by increasing the size of the antenna movement region.

	\subsection{Performance Analysis for General 2D Region $\mathcal{C}$}
	\begin{figure}[!t]
		\centering
		\includegraphics[width=80mm]{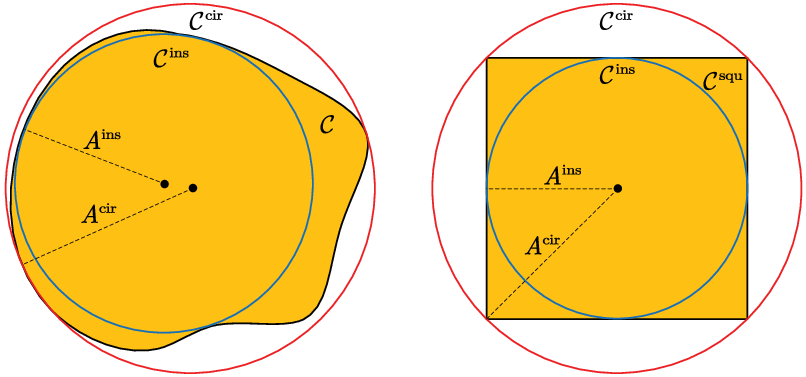}
		\caption{Illustration of the maximum inscribed circle and minimum circumscribed circle for an irregular 2D region, $\mathcal{C}$ (left) and a square region, $\mathcal{C}^\textrm{squ}$ (right).}
		\label{inscribed_circumscribed_circle}
	\end{figure}

	As shown in Fig.~\ref{inscribed_circumscribed_circle}, for a general 2D region $\mathcal{C}$, its minimum circumscribed circle and maximum inscribed circle can be determined using the sub-zone division approach \cite{huang2021effective}. Let $\mathcal{C}^\textrm{cir}$ and $\mathcal{C}^\textrm{ins}$ denote the regions covered by the minimum circumscribed circle and the maximum inscribed circle of $\mathcal{C}$, with radius $A^\textrm{cir}$ and $A^\textrm{ins}$, respectively. Since $\mathcal{C}^\textrm{ins}\subset\mathcal{C}\subset\mathcal{C}^\textrm{cir}$, the solution set of problem (P2) for $\mathcal{C}^\textrm{cir}$ contains that for region $\mathcal{C}$, while the solution set of problem (P2) for $\mathcal{C}$ contains that for region $\mathcal{C}^\textrm{ins}$. Denote the maximum objective value of problem (P2) and the corresponding min-max CRB for $\mathcal{C}$ by $\delta(\mathcal{C})$ and $\overline{{\rm{CRB}}}(\mathcal{C})$, respectively. Then, according to Theorem 2, the upper-bound of $\delta(\mathcal{C})$ can be expressed as
	\begin{align}\label{upperbound}
		\delta(\mathcal{C})\leq\delta(\mathcal{C}^\textrm{cir})\leq \frac{(A^\textrm{cir})^2}{2}.
	\end{align}
	Thus, the corresponding lower-bound of $\overline{{\rm{CRB}}}(\mathcal{C})$ is given by
	\begin{align}\label{lowerrbound}
		\overline{{\rm{CRB}}}(\mathcal{C}) \geq  \frac{\sigma^2\lambda^2}{4\pi^2TPN|\beta|^2(A^\textrm{cir})^2}.
	\end{align}
	Moreover, when $D\leq 2A^\textrm{ins}\sin(\pi/N)$ and $N=4K$ with $K\in\mathbb{Z}^{+}$, the upper-bound of the objective value $\delta$ in problem (P2) is tight for $\mathcal{C}^\textrm{ins}$, i.e., $\delta(\mathcal{C}^\textrm{ins})= (A^\textrm{ins})^2/2$. Thus,  we obtain the following lower-bound of $\delta(\mathcal{C})$ as
	\begin{align}
		\delta(\mathcal{C})\geq\delta(\mathcal{C}^\textrm{ins})= \frac{(A^\textrm{ins})^2}{2}.
	\end{align}
	Accordingly, the corresponding upper-bound of $\overline{{\rm{CRB}}}(\mathcal{C})$ is given by
	\begin{align}
		\overline{{\rm{CRB}}}(\mathcal{C}) \leq  \frac{\sigma^2\lambda^2}{4\pi^2TPN|\beta|^2(A^\textrm{ins})^2}.
	\end{align}
	For the typical square region $\mathcal{C}^\textrm{squ}$ with size $A\times A$, we have $A^\textrm{ins}=A/2$ and $A^\textrm{cir}=A/\sqrt{2}$. Thus, the upper-bound of $\delta(\mathcal{C}^\textrm{squ})$ is $A^2/4$, while the lower-bound of $\overline{{\rm{CRB}}}(\mathcal{C}^\textrm{squ})$ is $\frac{\sigma^2\lambda^2}{2\pi^2TPN|\beta|^2 A^2}$. Moreover, when $D\leq A\sin(\pi/N)$ and $N=4K$ with $K\in\mathbb{Z}^{+}$, the lower-bound of $\delta(\mathcal{C}^\textrm{squ})$ is $A^2/8$, while the upper-bound of $\overline{{\rm{CRB}}}(\mathcal{C}^\textrm{squ})$ is $\frac{\sigma^2\lambda^2}{\pi^2TPN|\beta|^2 A^2}$.

	\section{APV Optimization for 2D MA Array}
	Although the previous section provides lower- and upper-bounds for the min-max CRB in 2D antenna movement region $\mathcal{C}$ with arbitrary shapes, it is still challenging to optimize MAs' positions in problem (P2) due to the difficulty of solving it optimally. In this section, we introduce an alternating optimization algorithm to obtain locally optimal solutions for problem (P2). Specifically, the algorithm iteratively solves two subproblems of (P2) in an alternating manner, each for  optimizing one of the horizontal and vertical APVs $\bm{x}$ and $\bm{y}$ by keeping the other fixed. Next, we present this algorithm in detail and analyze its convergence and computational complexity.
	
	Notice that the analysis in the previous section holds for any continuous 2D antenna movement region $\mathcal{C}$ with arbitrary shapes. However, for ease of optimization, we consider in this section that $\mathcal{C}$ is a convex 2D region, such that the constraint \eqref{P2d} is convex. Moreover, we rewrite \eqref{P2b} and \eqref{P2c} in the standard quadratic form. Specifically, ${\rm{var}}(\bm{x})$, ${\rm{var}}(\bm{y})$, and ${\rm{cov}}(\bm{x},\bm{y})$ can be rewritten as
	\begin{align}
		{\rm{var}}(\bm{x}) &\triangleq \bm{x}^{\mathsf T} \bm{B} \bm{x}, \notag\\
		{\rm{var}}(\bm{y}) &\triangleq \bm{y}^{\mathsf T} \bm{B} \bm{y}, \notag\\
		{\rm{cov}}(\bm{x},\bm{y}) &\triangleq \bm{x}^{\mathsf T} \bm{B} \bm{y},
	\end{align}
	where $\bm{B}\triangleq \frac{1}{N}\bm{I}_N-\frac{1}{N^2}\bm{1}_N\bm{1}_N^{\mathsf T}$ is a positive semi-definite (PSD) matrix. Then, \eqref{P2b} and \eqref{P2c} can be equivalently transformed to
	\begin{subequations}
		\begin{align}
			& G(\bm{x},\bm{y})\triangleq G_1(\bm{x})-  G_2(\bm{x}, \bm{y})\triangleq\bm{x}^{\mathsf T}\bm{B}\bm{x}-\frac{\left(\bm{x}^{\mathsf T}\bm{B}\bm{y}\right)^2}{\bm{y}^{\mathsf T}\bm{B}\bm{y}} \geq \delta, \label{xBx}\\
			&G(\bm{y},\bm{x}) = G_1(\bm{y})-  G_2(\bm{y}, \bm{x}) = \bm{y}^{\mathsf T}\bm{B}\bm{y}-\frac{\left(\bm{y}^{\mathsf T}\bm{B}\bm{x}\right)^2}{\bm{x}^{\mathsf T}\bm{B}\bm{x}} \geq \delta, \label{yBy}
		\end{align}
	\end{subequations}
	where $G_1(\bm{x})\triangleq\bm{x}^{\mathsf T}\bm{B}\bm{x}$ and $G_2(\bm{x}, \bm{y})\triangleq \frac{\left(\bm{x}^{\mathsf T}\bm{B}\bm{y}\right)^2}{\bm{y}^{\mathsf T}\bm{B}\bm{y}}$.
	
	\subsection{Optimization of $\bm{x}$ with Given $\bm{y}$}
	In this subsection, we optimize $\bm{x}$ in problem (P2) with a given $\bm{y}$. Since constraints \eqref{xBx} and \eqref{yBy} are non-convex with respect to $\bm{x}$, we leverage the successive convex approximation (SCA) technique to transform them into convex constraints. Specifically, given $\bm{x}^p\in{\mathbb{R}^{N}}$ obtained in the $p$th iteration of SCA, since $G_1(\bm{x})$ is a convex function with respect to $\bm{x}$, it can be globally lower-bounded by the first-order Taylor expansion at $\bm{x}^p$ as
	\begin{align}\label{G_bar}
		&G_1(\bm{x})\geq \bar{G}_1(\bm{x}|\bm{x}^p)\\
		&\triangleq G_1(\bm{x}^p) + 2(\bm{x}^p)^{\mathsf T} \bm{B} (\bm{x}-\bm{x}^p) \notag\\
		&= 2(\bm{x}^p)^{\mathsf T} \bm{B} \bm{x} - G_1(\bm{x}^p). \notag
	\end{align}
	Moreover, given $\bm{y}$, $G_2(\bm{x}, \bm{y})$ is convex with respect to $\bm{x}$. Then, in the $p$th iteration of SCA, the convex surrogate function that globally lower-bounds $G(\bm{x},\bm{y})$ is given by
	\begin{align}
		G(\bm{x},\bm{y}) \geq \bar{G}_1(\bm{x}|\bm{x}^p) - G_2(\bm{x}, \bm{y}).
	\end{align}
	Thus, constraint \eqref{xBx} can be relaxed to be convex with respect to $\bm{x}$ as
	\begin{align}\label{30}
		\bar{G}_1(\bm{x}|\bm{x}^p) - G_2(\bm{x}, \bm{y}) \geq \delta.
	\end{align}
	
	On the other hand, given $\bm{y}$, \eqref{yBy} can be equivalently written as
	\begin{align}\label{loglog}
		\frac{\left(\bm{y}^{\mathsf T}\bm{B}\bm{x}\right)^2}{\bm{y}^{\mathsf T}\bm{B}\bm{y}-\delta} \leq \bm{x}^{\mathsf T}\bm{B}\bm{x}.
	\end{align}
	Notice that the left-hand side of \eqref{loglog} is jointly convex in $\{\bm{x}, \delta\}$ because $\left(\bm{y}^{\mathsf T}\bm{B}\bm{x}\right)^2$ is a convex quadratic function of $\bm{x}$, while $\bm{y}^{\mathsf T}\bm{B}\bm{y}-\delta$ is linear with respect to $\delta$. Recalling $G_1(\bm{x})=\bm{x}^{\mathsf T}\bm{B}\bm{x}$ and it is globally lower-bounded by $\bar{G}_1(\bm{x}|\bm{x}^p)$, constraint \eqref{loglog} can be relaxed as a convex constraint given by 
	\begin{align}\label{36}
		\frac{\left(\bm{y}^{\mathsf T}\bm{B}\bm{x}\right)^2}{\bm{y}^{\mathsf T}\bm{B}\bm{y}-\delta} \leq \bar{G}_1(\bm{x}|\bm{x}^p).
	\end{align}

	Furthermore, according to the Cauchy-Schwartz inequality, i.e., $\bm{u}^{\mathsf T}\bm{v}\leq\|\bm{u}\|_2\|\bm{v}\|_2$ for any two vectors $\bm{u}$ and $\bm{v}$ of equal size, the linear surrogate function that globally minorizes $\|\bm{r}_k-\bm{r}_l\|_2$ at $(\bm{r}_k^p-\bm{r}_l^p)$ can be constructed by letting $\bm{u}\leftarrow\bm{r}_k-\bm{r}_l$ and $\bm{v}\leftarrow\bm{r}_k^p-\bm{r}_l^p$, i.e., 
	\begin{align}
		&\|\bm{r}_k-\bm{r}_l\|_2\geq \frac{\left(\bm{r}_k^p-\bm{r}_l^p\right)^{\mathsf T}\left(\bm{r}_k-\bm{r}_l\right)}{\|\bm{r}_k^p-\bm{r}_l^p\|_2},
	\end{align}
	where $\bm{r}_n^p=[x_n^p,y_n]^{\mathsf T}$. Thus, constraint \eqref{P2e} can be relaxed as a linear constraint given by 
	\begin{align}\label{38}
		\frac{\left(\bm{r}_k^p-\bm{r}_l^p\right)^{\mathsf T}\left(\bm{r}_k-\bm{r}_l\right)}{\|\bm{r}_k^p-\bm{r}_l^p\|_2}\geq D,~~ k,l = 1,2,\ldots,N,~~ k\neq l.
	\end{align}
	 Hereto, in the $p$th iteration, the optimization of $\bm{x}$ can be relaxed as
	\begin{subequations}
		\begin{align}
			\textrm {(P3)}~~\max_{\bm{x},\delta} \quad & \delta \label{P4a}\\
			\text{s.t.} \quad & \eqref{30}, \eqref{36}, \eqref{P2d}, \eqref{38}, \notag
		\end{align}
	\end{subequations}
	which is a convex optimization problem since constraints \eqref{30}, \eqref{36}, \eqref{P2d}, and \eqref{38} are convex with respect to $\bm{x}$ and $\delta$. Thus, it can be efficiently solved using existing optimization toolboxes such as CVX \cite{grantcvx}.

	\subsection{Optimization of $\bm{y}$ with Given $\bm{x}$}
	In this subsection, we aim to optimize $\bm{y}$ with a given $\bm{x}$, where the non-convex constraints \eqref{xBx} and \eqref{yBy} are relaxed to be convex by using the SCA technique. Let $\bm{y}^q$ denote the $y$-axis APV in the $q$th iteration of SCA. In problem (P2), since the optimization of $\bm{y}$ with given $\bm{x}$ has a similar structure as the optimization of $\bm{x}$ with given $\bm{y}$, we can modify problem (P3) to optimize $\bm{y}$ by replacing $\left\{\bm{x}, \bm{x}^p, \bm{y} \right\}$ with $\left\{\bm{y}, \bm{y}^q, \bm{x} \right\}$. Thus, in the $q$th iteration of SCA, we can optimize $\bm{y}$ by solving the following optimization problem
	\begin{subequations}
		\begin{align}
			\textrm {(P4)}~~\max_{\bm{y},\delta} \quad & \delta \label{P5a}\\
			\text{s.t.} \quad & \bar{G}_1(\bm{y}|\bm{y}^q) - G_2(\bm{y}, \bm{x}) \geq \delta, \label{P5b}\\
			& \frac{\left(\bm{x}^{\mathsf T}\bm{B}\bm{y}\right)^2}{\bm{x}^{\mathsf T}\bm{B}\bm{x}-\delta} \leq \bar{G}_1(\bm{y}|\bm{y}^q), \label{P5c}\\
			& \frac{\left(\bm{r}_k^q-\bm{r}_l^q\right)^{\mathsf T}\left(\bm{r}_k-\bm{r}_l\right)}{\|\bm{r}_k^q-\bm{r}_l^q\|_2}\geq D,~~ k\neq l, \label{P5d}\\
			& \eqref{P2d}, \notag
		\end{align}
	\end{subequations}
	where $\bm{r}_n^q=[x_n,y_n^q]^{\mathsf T}$. Since constraints \eqref{P5b}, \eqref{P5c}, \eqref{P5d}, and \eqref{P2d} are convex with respect to $\bm{y}$ and $\delta$, problem (P4) is a convex optimization problem and can be efficiently solved using existing optimization toolboxes such as CVX \cite{grantcvx}.
	
	\subsection{Overall Algorithm}
	According to the solutions derived above for solving problems (P3) and (P4), we summarize the overall alternating optimization algorithm for solving problem (P2) in Algorithm~\ref{alg1}. Specifically, in lines 4-7, $\bm{x}$ is iteratively updated with a given $\bm{y}$ via solving problem (P3) until the increase of $\delta$ in \eqref{P4a} is below a predefined threshold $\epsilon_{\bm{x}}$. Then, in lines 9-12, $\bm{y}$ is iteratively updated with a given $\bm{x}$ via solving problem (P4) until the increase of $\delta$ in \eqref{P5a} is below a predefined threshold $\epsilon_{\bm{y}}$. The overall algorithm iterates between solving problems (P3) and (P4) until the increase of $\delta$ is below a predefined threshold $\epsilon$.
	
	\begin{algorithm}[!t]
		\caption{Alternating Optimization Algorithm for Solving Problem (P2)}
		\label{alg1}
		\begin{algorithmic}[1]
			\STATE \emph{Input:} $N$, $D$, $\epsilon$, $\epsilon_{\bm{x}}$, $\epsilon_{\bm{y}}$, $\bm{x}^0$, $\bm{y}^0$, $\mathcal{C}$.
			\STATE Initialization:  $p \leftarrow 0$, $q \leftarrow 0$, $\bm{y} \leftarrow \bm{y}^0$, $\delta \leftarrow 0$.
			
			\WHILE{Increase of $\delta$ is above $\epsilon$}	
			
			\WHILE{Increase of $\delta$ in \eqref{P4a} is above $\epsilon_{\bm{x}}$}
			\STATE Obtain $\bm{x}^{p+1}$ by solving problem (P3).
			\STATE $p \leftarrow p+1$.
			\ENDWHILE
			\STATE $\bm{x} \leftarrow \bm{x}^p$, $p \leftarrow 0$.
			
			\WHILE{Increase of $\delta$ in \eqref{P5a} is above $\epsilon_{\bm{y}}$}
			\STATE Obtain $\bm{y}^{q+1}$ by solving problem (P4).
			\STATE $q \leftarrow q+1$.
			\ENDWHILE
			\STATE $\bm{y} \leftarrow \bm{y}^q$, $q \leftarrow 0$.
			\STATE $\delta \leftarrow \min  \left[{\rm{var}}(\bm{x})-\frac{{\rm{cov}}(\bm{x},\bm{y})^2}{{\rm{var}}(\bm{y})},  {\rm{var}}(\bm{y})-\frac{{\rm{cov}}(\bm{x},\bm{y})^2}{{\rm{var}}(\bm{x})}\right]$.
			
			\ENDWHILE	
			\STATE \emph{Output:} $\bm{x}$, $\bm{y}$.
		\end{algorithmic}
	\end{algorithm}
	
	Next, we analyze the convergence of Algorithm~\ref{alg1} as follows. In the $p$th iteration for solving problem (P3), we have the following inequalities for constraint \eqref{xBx}:
	\begin{align}\label{convergence}
		G(\bm{x}^p,\bm{y}) &\overset{(a_1)}= \bar{G}_1(\bm{x}^p|\bm{x}^p) - G_2(\bm{x}^p, \bm{y}) \\
		&\overset{(a_2)}\leq \bar{G}_1(\bm{x}^{p+1}|\bm{x}^p) - G_2(\bm{x}^{p+1}, \bm{y}) \overset{(a_3)}\leq G(\bm{x}^{p+1},\bm{y}),\notag
	\end{align}
	where the equality $(a_1)$ holds because $G_1(\bm{x})= \bar{G}_1(\bm{x}|\bm{x}^p)$ at $\bm{x}=\bm{x}^p$; the inequality $(a_2)$ holds since we maximize $\bar{G}_1(\bm{x}|\bm{x}^p) - G_2(\bm{x}, \bm{y})$ in the $p$th iteration; the inequality $(a_3)$ holds because $\bar{G}_1(\bm{x}|\bm{x}^p)$ is a global lower-bound of $G_1(\bm{x})$. Hence, the sequence $\{G(\bm{x}^p,\bm{y})\}_{p=0}^{\infty}$ is monotonically increasing. Similarly, we have
	\begin{align}\label{convergence2}
		G(\bm{y},\bm{x}^p) &= G_1(\bm{y})-  \frac{\left(\bm{y}^{\mathsf T}\bm{B}\bm{x}^p\right)^2}{\bar{G}_1(\bm{x}^p|\bm{x}^p)} \\
		&\leq G_1(\bm{y})-  \frac{\left(\bm{y}^{\mathsf T}\bm{B}\bm{x}^{p+1}\right)^2}{\bar{G}_1(\bm{x}^{p+1}|\bm{x}^p)} \leq G(\bm{y},\bm{x}^{p+1}). \notag
	\end{align}
	Hence, the sequence $\{	G(\bm{y},\bm{x}^p)\}_{p=0}^{\infty}$ is monotonically increasing. Moreover, since $\delta = \min[G(\bm{x}^p,\bm{y}), G(\bm{y},\bm{x}^p)]$, the objective value of problem (P3) is guaranteed to increase monotonically. Similarly, the monotonic increment of the objective value after solving problem (P4) can be assured. Furthermore, alternately solving problems (P3) and (P4) guarantees the monotonically increasing objective value of problem (P2) during the outer iterations of Algorithm~\ref{alg1}, which is upper-bounded by \eqref{upperbound}. Therefore, the convergence of Algorithm~\ref{alg1} is guaranteed.
	
	The computational complexity of Algorithm~\ref{alg1} is analyzed as follows. The computational complexity for solving the convex optimization problem (P3) or (P4) is $\mathcal{O}(N^{3.5}\ln(1/\kappa))$ with accuracy given by $\kappa$ for the interior-point method \cite{fu2021reconf}. Let $I$, $I_{\bm{x}}$, and $I_{\bm{y}}$ denote the maximum number of iterations of executing lines 4–14 for alternately solving problems (P3) and (P4), that of executing lines 5-6 for solving problem (P3), and that of executing lines 10-11 for solving problem (P4), respectively. Therefore, the total computational complexity of Algorithm~\ref{alg1} is $\mathcal{O}(N^{3.5}\ln(1/\kappa)(I_{\bm{x}}+I_{\bm{y}})I)$.

	\section{Numerical Results}
	In this section, we present numerical results to evaluate the performance of our proposed solutions to optimize MAs' positions for target's AoA estimation with 1D or 2D MA arrays. The convergence thresholds in Algorithm~\ref{alg1} are set as $\epsilon=10^{-4}$ and $\epsilon_{\bm{x}}=\epsilon_{\bm{y}}=10^{-2}$. The minimum inter-antenna distance is set as $D=\lambda/2$. The average received SNR is defined as $P|\beta|^2/\sigma^2$. The number of snapshots is set as $T=1$.
	
	\subsection{1D MA Array}
	First, we consider the 1D line segment with length $A=10\lambda$. We set $N=16$, $\vartheta=45^{\circ}$, and thus $u=\cos \vartheta=0.71$. The considered benchmark schemes for setting antennas' positions are listed as follows: 1) \textbf{ULA with half-wavelength antenna spacing (ULAH)}: $\{x_n\}_{n=1}^N$ are set according to the ULA, with half-wavelength inter-antenna spacing; and 2) \textbf{ULA with full aperture (ULAF)}: $\{x_n\}_{n=1}^N$ are set according to the ULA with the largest achievable aperture, with the inter-antenna spacing of $A/(N-1)$, and $x_0=0$, $x_N=A$. Moreover, we adopt the MUSIC algorithm to estimate the AoA via \eqref{MUSIC1D} and compute the corresponding CRB of AoA estimation MSE via \eqref{CRB1D} for different schemes.	
	
	\begin{figure}[!t]
		\centering
		\includegraphics[width=75mm]{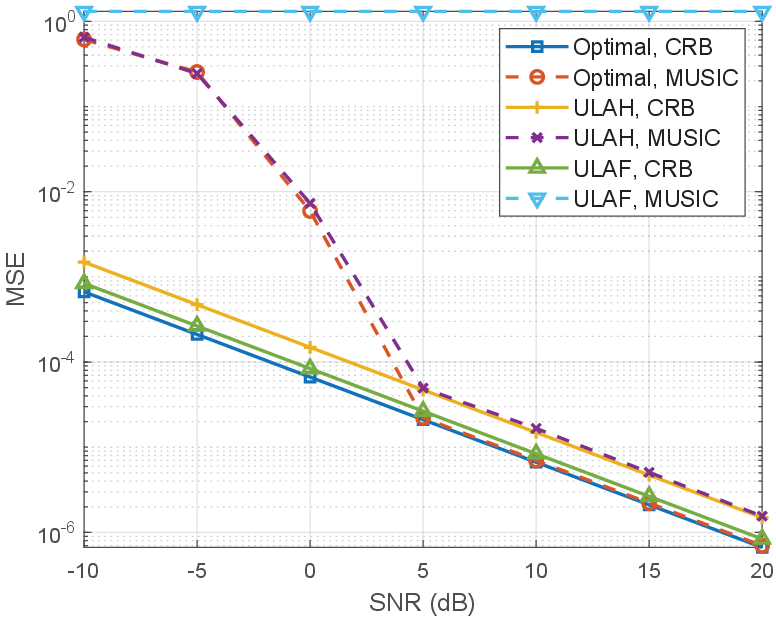}
		\caption{MSE versus SNR in the case of 1D array.}
		\label{1D_SNR}
	\end{figure}

	\begin{figure}[!t]
		\centering
		\includegraphics[width=75mm]{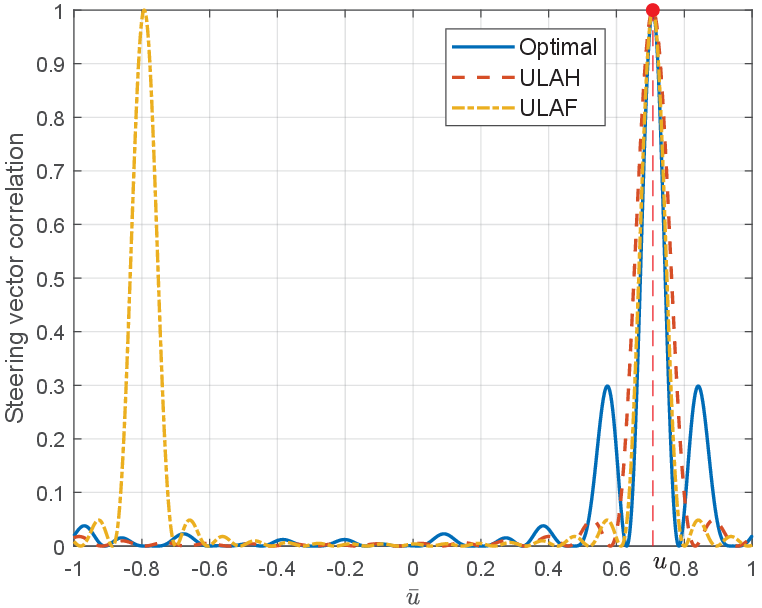}
		\caption{Comparison of steering vector correlation with different antennas' positions in the case of 1D array.}
		\label{1D_pattern}
	\end{figure}
	
	Fig.~\ref{1D_SNR} compares the AoA estimation MSEs (including both the actual MSE and its CRB) versus SNR for different schemes. It is shown that	the curves depicting AoA estimation MSE via the MUSIC algorithm can approach the CRB for both the optimal antennas' positions in \eqref{1Doptimal} and ULAH in the high-SNR regime. Furthermore, the proposed optimal antennas' positions in \eqref{1Doptimal} achieve a significantly lower MSE compared to the benchmark schemes. At $\text{SNR}=20$ dB, the proposed scheme demonstrates a $55.3\%$ MSE reduction over the ULAH scheme.

	To gain more insights, we illustrate in Fig.~\ref{1D_pattern} the steering vector correlation denoted by $q(\bar{u}|u)\triangleq\frac{1}{N^2}|\bm{\alpha}(\bm{x},u)^{\mathsf H} \bm{\alpha}(\bm{x},\bar{u})|^2$, for each case of antennas' positions versus $\bar{u}\in[-1,1]$ for a given $u=0.71$. To increase spatial resolution while minimizing ambiguity in AoA estimation, $q(\bar{u}|u)$ is desired to be an impulse function, i.e, $q(\bar{u}|u)\rightarrow\left\{
	\begin{array}{ll}
		1 & \bar{u}=u,\\
		0,  & \bar{u}\neq u.
	\end{array} \right.$ It is shown in Fig.~\ref{1D_pattern} that the optimal antennas' positions can yield a narrower main lobe compared to the benchmark schemes, which can achieve higher spatial resolution in AoA estimation, thus resulting in lower MSE. Additionally, it can be observed that $\frac{1}{N^2}|\bm{\alpha}(\bm{x},u)^{\mathsf H} \bm{\alpha}(\bm{x},-0.79)|^2=1$ for the ULAF scheme, indicating the ambiguity in distinguishing the true AoA $u=0.71$ from its false estimation of $-0.79$. This ambiguity results in the high MSE by the MUSIC algorithm for the ULAF scheme as shown in Fig.~\ref{1D_SNR}.

	\subsection{2D MA Array}
	
	\begin{figure}[!t]
		\centering
		\subfigure[$N=36$]{
			\begin{minipage}{.47\textwidth}
				\centering
				\includegraphics[scale=.6]{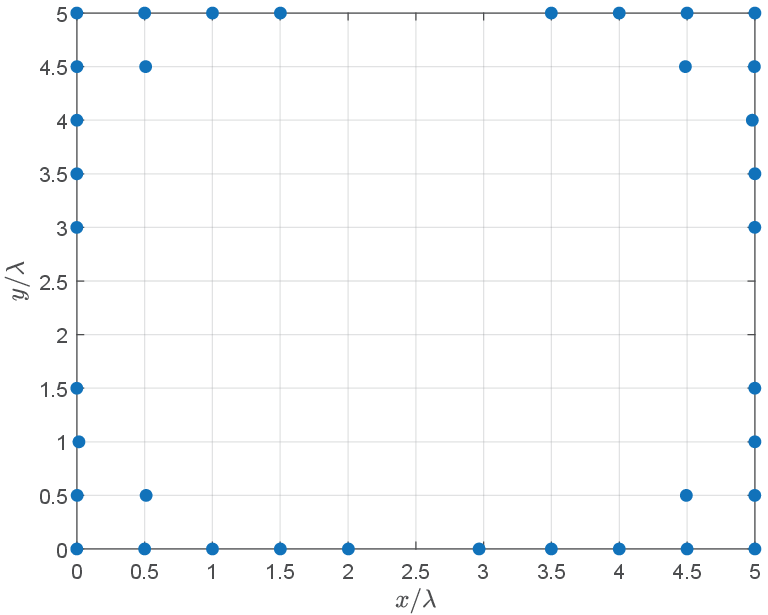}
			\end{minipage}
			\label{2D_position_36}
		}
		\subfigure[$N=100$]{
			\begin{minipage}{.47\textwidth}
				\centering
				\includegraphics[scale=.6]{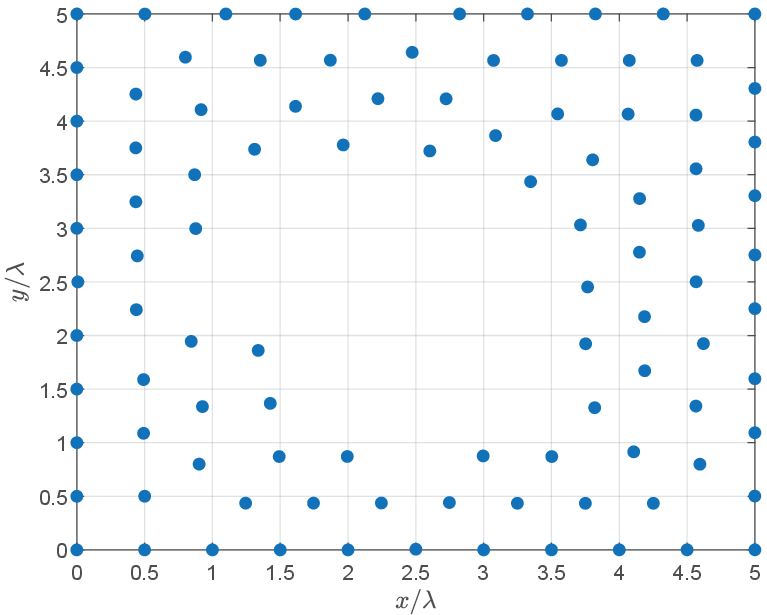}
			\end{minipage}
			\label{2D_position_100}
		}
		\caption{Illustration of the MAs' positions on 2D array.}
		\label{FIG4}
	\end{figure}

	\begin{figure}[!t]
		\centering
		\subfigure[${\rm{MSE}}(u)$]{
			\begin{minipage}{.47\textwidth}
				\centering
				\includegraphics[scale=.6]{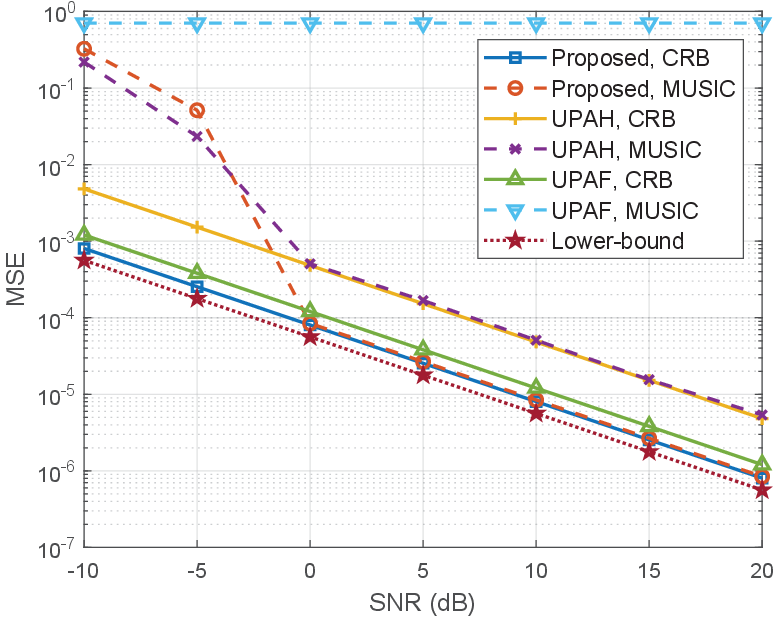}
			\end{minipage}
			\label{2D_SNR_u}
		}
		\subfigure[${\rm{MSE}}(v)$]{
			\begin{minipage}{.47\textwidth}
				\centering
				\includegraphics[scale=.6]{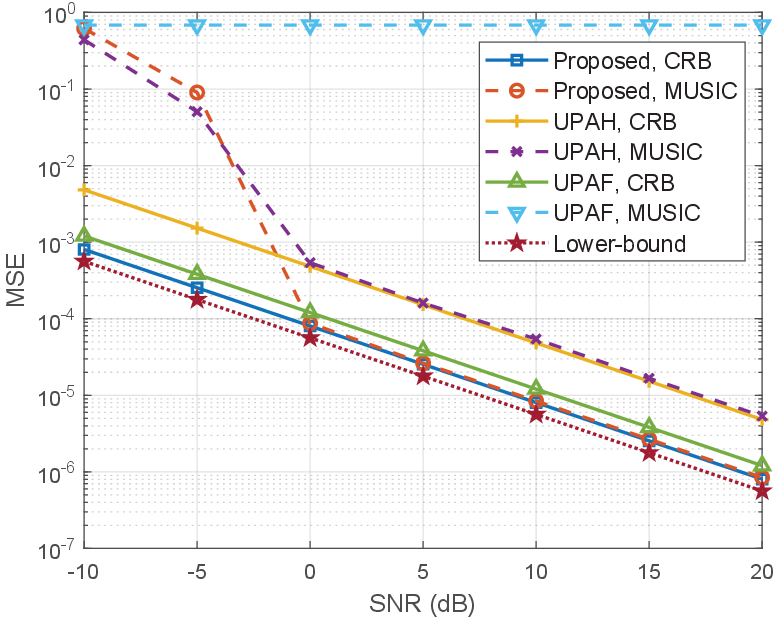}
			\end{minipage}
			\label{2D_SNR_v}
		}
		\caption{MSE versus SNR in the case of 2D array.}
		\label{FIG5}
	\end{figure}

	\begin{figure}[!t]
		\centering
		\subfigure[Proposed]{
			\begin{minipage}{.47\textwidth}
				\centering
				\includegraphics[scale=.6]{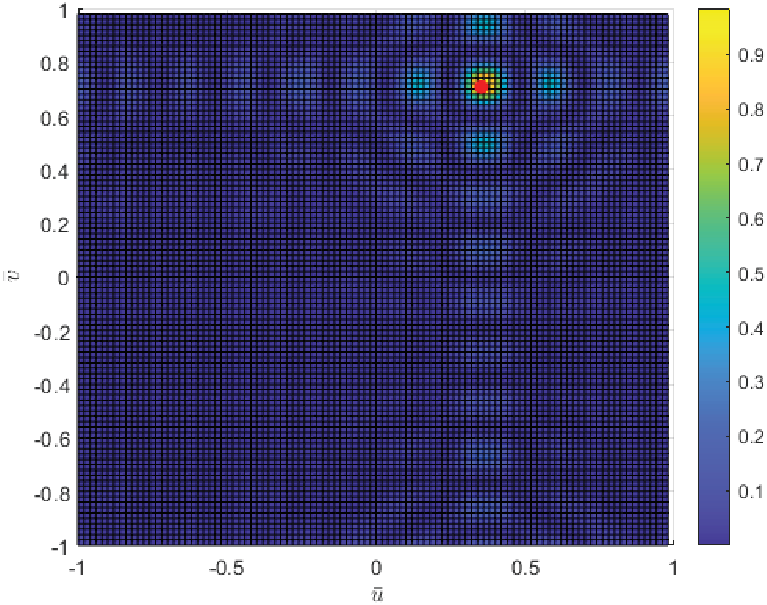}
			\end{minipage}
			\label{2D_pattern_proposed}
		}
		\subfigure[UPAH]{
			\begin{minipage}{.47\textwidth}
				\centering
				\includegraphics[scale=.6]{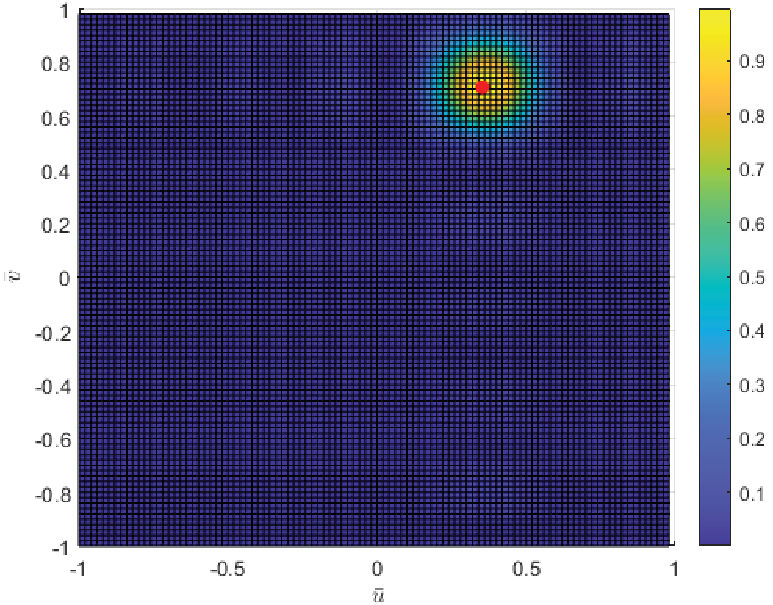}
			\end{minipage}
			\label{2D_pattern_UPAH}
		}
		\subfigure[UPAF]{
			\begin{minipage}{.47\textwidth}
				\centering
				\includegraphics[scale=.6]{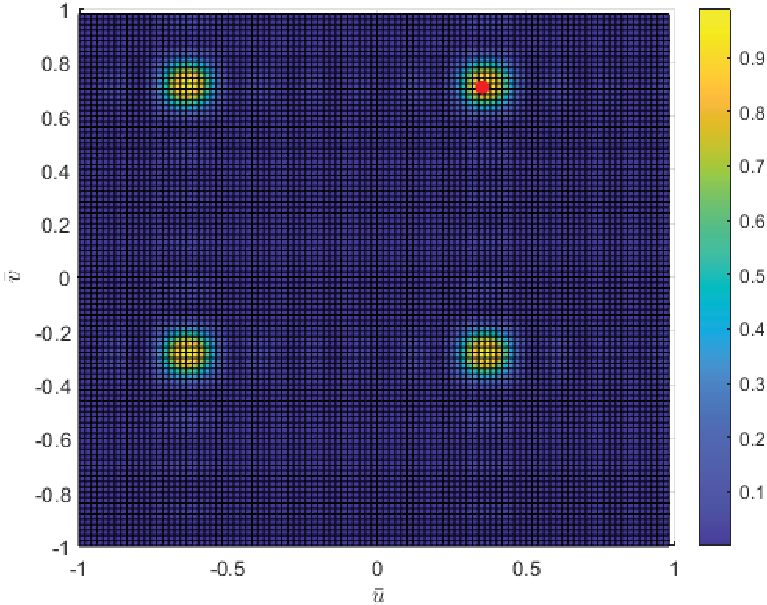}
			\end{minipage}
			\label{2D_pattern_UPAF}
		}
		\caption{Comparison of steering vector correlation with different antennas' positions in the case of 2D array.}
		\label{FIG6}
	\end{figure}

	\begin{figure}[!t]
		\centering
		\includegraphics[width=75mm]{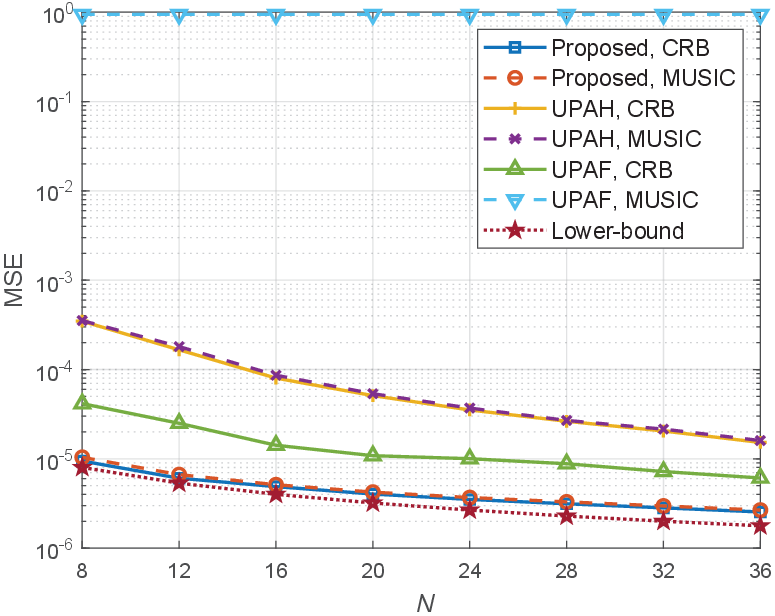}
		\caption{${\rm{MSE}}(u)$ versus $N$ in the case of 2D array.}
		\label{2D_N}
	\end{figure}
	
	Next, we present numerical results to validate the effectiveness of the proposed Algorithm~\ref{alg1} for AoA estimation with 2D MA array. We consider the 2D square region $\mathcal{C}^\textrm{squ}$ with size $A\times A$. We set $A=5\lambda$, $\theta=45^{\circ}$, $\phi=60^{\circ}$, and thus $u=\sin \theta \cos \phi=0.35$ and $v=\cos \theta=0.71$. The considered benchmark schemes for setting antennas' positions are listed as follows: 1) \textbf{UPA with half-wavelength antenna spacing (UPAH)}: $\{\bm{r}_n\}_{n=1}^N$ are set according to the UPA, with half-wavelength inter-antenna spacing both horizontally and vertically; and 2) \textbf{UPA with full aperture (UPAF)}: $\{\bm{r}_n\}_{n=1}^N$ are set according to the UPA with the largest achievable aperture, with the inter-antenna spacing of $A/\left(\left\lceil\sqrt{N}\right\rceil-1\right)$ both horizontally and vertically. For the proposed design for MAs' positions in Algorithm~\ref{alg1}, $\bm{x}^0$ and $\bm{y}^0$ are
	initialized based on the UPAF scheme. Moreover, we adopt the MUSIC algorithm to estimate the two spatial AoAs via \eqref{MUSIC2D} and compute the corresponding CRB of AoA estimation MSE via \eqref{CRBr} for different schemes. We also compute the lower-bound of min-max CRB for the considered square region $\mathcal{C}^\textrm{squ}$ via \eqref{lowerrbound}, i.e., $\frac{\sigma^2\lambda^2 }{2\pi^2TPN|\beta|^2 A^2}\triangleq \widetilde{{\rm{CRB}}}(\mathcal{C}^\textrm{squ})$.
	
	In Fig.~\ref{FIG4}, we illustrate the MAs' positions for the proposed algorithm (i.e., Algorithm~\ref{alg1}). We set $N=36$ and $N=100$, respectively. It can be observed that the optimized MAs' positions are symmetric with respect to both the $x$-axis and $y$-axis, for balancing the estimation accuracy of $u$ versus $v$. Additionally, the optimized MAs' positions are located as far from the center of the square region as possible to maximize the aperture of the MA array, thereby increasing the angular resolution for enhancing the AoA estimation performance.
	
	Fig.~\ref{FIG5} presents the comparison of the CRBs and actual values of ${\rm{MSE}}(u)$ and ${\rm{MSE}}(v)$ versus SNR for different schemes. We set $N=36$. It can be observed that the curves depicting AoA estimation MSE via the MUSIC algorithm can approach the CRB for both the proposed scheme and UPAH in the high-SNR regime. In addition, the proposed scheme achieves significantly lower MSE compared to the UPAH and UPAF schemes and its MSE is also close to the CRB lower-bound, $\widetilde{{\rm{CRB}}}(\mathcal{C}^\textrm{squ})$. Moreover, ${\rm{MSE}}(u)$ and ${\rm{MSE}}(v)$ show similar performance by using the proposed scheme, indicating its effectiveness for estimating both $u$ and $v$.

	To gain more insights, we illustrate in Fig.~\ref{FIG6} the steering vector correlation denoted by $\frac{1}{N^2}|\bm{\alpha}(\tilde{\bm{r}},\bm{\eta})^{\mathsf H} \bm{\alpha}(\tilde{\bm{r}},\bar{\bm{\eta}})|^2$, for each scheme versus $\bar{\bm{\eta}}\in[-1,1]\times[-1,1]$ with a given $\bm{\eta}=[0.35,0.71]^{\mathsf T}$. We set $N=36$. It can be observed that the proposed scheme exhibits a narrower main lobe compared to the benchmark schemes, resulting in the MSE reduction over the benchmark schemes. Additionally, it can be observed that $\frac{1}{N^2}|\bm{\alpha}(\tilde{\bm{r}},\bm{\eta})^{\mathsf H} \bm{\alpha}(\tilde{\bm{r}},\check{\bm{\eta}})|^2=1$ with $\check{\bm{\eta}}=[0.35,-0.28]^{\mathsf T}$, $[-0.64,0.28]^{\mathsf T}$, and $[-0.64,0.71]^{\mathsf T}$ for the UPAF scheme, indicating the ambiguity in distinguishing the true angles of $\bm{\eta}=[0.35,0.71]^{\mathsf T}$ from their false estimates of $\check{\bm{\eta}}$. This ambiguity results in the high MSE by the MUSIC algorithm for the UPAF scheme as shown in Fig.~\ref{FIG5}.
	
	In Fig.~\ref{2D_N}, we compare the CRBs and actual values of AoA estimation MSEs versus $N$ for different schemes. Given the similar performance between ${\rm{MSE}}(u)$ and ${\rm{MSE}}(v)$, we only show ${\rm{MSE}}(u)$ for brevity. We set $\text{SNR}=15$ dB. It is observed that the proposed scheme always outperforms the benchmark schemes and performs very closely to the CRB lower-bound, $\widetilde{{\rm{CRB}}}(\mathcal{C}^\textrm{squ})$, for different numbers of antennas. Furthermore, the proposed scheme demonstrates an MSE reduction of $97.1\%$ over the UPAH scheme, even with only $N=8$ MAs. This result indicates the practical significance of our proposed MA-aided sensing system design, even with a limited number of MAs.
	
	\section{Conclusions}
	In this paper, we proposed a new wireless sensing system equipped with MA arrays to improve the angle estimation accuracy by exploiting antenna position optimization. We derived the CRB for AoA estimation MSE as a function of the antennas' positions for both 1D and 2D MA arrays. First, for the case of 1D MA array, we obtained a globally optimal solution for the MAs' positions in closed form to minimize the CRB of AoA estimation MSE. Then, for the case of 2D MA array, we derived an optimal solution for the MAs' positions to achieve the min-max CRB for the special case of circular antenna movement region under certain numbers of MAs and circle radii. Thereby, both the lower- and upper-bounds of the min-max CRB were obtained for the antenna movement region with arbitrary shapes. Moreover, we developed an efficient alternating optimization algorithm to obtain a locally optimal solution for MAs' positions by iteratively optimizing one between their horizontal and vertical coordinates with the other being fixed. Numerical results demonstrated that our proposed 1D/2D MA array can significantly decrease the CRB of AoA estimation MSE as well as the actual MSE compared to conventional ULAs/UPAs with different values of uniform inter-antenna spacing. Furthermore, it was revealed that by optimizing MAs' positions, the steering vectors of our designed 1D/2D MA arrays exhibit low correlation in the angular domain, thus effectively reducing the ambiguity of angle estimation.

	\appendix
	\subsection{Proof of Theorem 1}
	For arbitrary $\bm{x}$ satisfying constraints \eqref{P1b} and \eqref{P1c}, we show that ${\rm{var}}(\bm{x}^\star)\geq {\rm{var}}(\bm{x})$ by sequentially adjusting MAs' positions $\{x_n\}_{n=1}^N$ through the following procedure.
	
	The procedure consists of $N$ steps, where one MA's position is adjusted in each step with those of all the other MAs being fixed. Let $\bm{x}^{(k-1)}\triangleq[x^{(k-1)}_1,x^{(k-1)}_2,\ldots,x^{(k-1)}_N]^{\mathsf T}$ denote the MAs' positions before the $k$th $(k=1,2,\ldots,N)$ adjustment, where $\bm{x}^{(0)}=\bm{x}$. Then, for $k=1,2,\ldots,\lfloor N/2 \rfloor$, the $k$th MA's position is updated by
	\begin{align}
		x^{(k)}_{k} \leftarrow \bm{x}^\star[k],
	\end{align}
	while the other $(N-1)$ MAs' positions remain unchanged, i.e.,
	\begin{align}
		x^{(k)}_{n} = x^{(k-1)}_{n}, n\in\mathcal{N}\setminus k,
	\end{align}
	where $\mathcal{N}\triangleq\{1,2,\ldots,N\}$. Similarly, for $k=\lfloor N/2 \rfloor+1,\lfloor N/2 \rfloor+2,\ldots,N$, the $(N-k+\lfloor N/2 \rfloor+1)$th MA's position is updated by
	\begin{align}
		x^{(k)}_{N-k+\lfloor N/2 \rfloor+1} \leftarrow \bm{x}^\star[N-k+\lfloor N/2 \rfloor+1],
	\end{align}
	while the other $(N-1)$ MAs' positions remain unchanged, i.e.,
	\begin{align}
		x^{(k)}_{n} = x^{(k-1)}_{n}, n\in\mathcal{N}\setminus (N-k+\lfloor N/2 \rfloor+1).
	\end{align}
	
	\begin{table}[!t]
		\centering
		\caption{Illustration of $\{\bm{x}^{(k)}\}_{k=1}^N$.}
		\label{Talbe1}
		\begin{tabular}{p{1.2cm}p{1.2cm}p{1.2cm}p{1.2cm}p{1.2cm}}
			\toprule
			$k$  &   $x^{(k)}_1$     &   $x^{(k)}_2$      &   $x^{(k)}_3$      &   $x^{(k)}_4$         \\
			\midrule
			0 ($\bm{x}$)  &   1  &   3  &   6  &   7.5     \\
			1 &   {\textbf{0}}  &   3  &   6  &   7.5     \\
			2  &   0  &   {\textbf{1}}  &   6  &   7.5    \\
			3  &   0  &   1  &   6  &   {\textbf{8}}     \\
			4 ($\bm{x}^\star$)  &   0  &   1  &   {\textbf{7}}  &   8     \\
			\bottomrule
		\end{tabular}
	\end{table}
	Table~\ref{Talbe1} shows an example of $\{\bm{x}^{(k)}\}_{k=1}^N$ for $\bm{x}=[1,3,6,7.5]^{\mathsf T}$ with $A=8$, $D=1$, and $N=4$, where $\bm{x}^\star$ is obtained by sequentially setting $x^{(1)}_1\leftarrow 0$, $x^{(2)}_2\leftarrow 1$, $x^{(3)}_4\leftarrow 8$, and $x^{(4)}_3\leftarrow 7$.
	
	In the above procedure, we can show that $\bm{x}^{(k)}$ satisfies constraints \eqref{P1b} and \eqref{P1c} based on mathematical induction. Specifically, since $\bm{x}^{(0)}=\bm{x}$, constraints \eqref{P1b} and \eqref{P1c} hold for $k=0$. Then, suppose that constraints \eqref{P1b} and \eqref{P1c} hold for $k\leq \lfloor N/2 \rfloor-1$. In the $(k+1)$th step, we have
	\begin{align}
		x^{(k+1)}_{k+1} &\leftarrow \bm{x}^\star[k+1],  \notag\\
		x^{(k+1)}_{n} &= x^{(k)}_{n}, n\in\mathcal{N}\setminus (k+1).
	\end{align}
	Then, we have
	\begin{align}
		x^{(k+1)}_{k+1}-x^{(k+1)}_{k} &= \bm{x}^\star[k+1]-x^{(k)}_{k} \notag\\ &=\bm{x}^\star[k+1]-\bm{x}^\star[k]\notag\\
		&= D.
	\end{align}
	Moreover, since 
	\begin{align}
		x^{(k)}_{k+1}-x^{(k)}_{k}&\geq D, \notag\\
		x^{(k)}_{k+2}-x^{(k)}_{k+1}&\geq D,
	\end{align}
	we have
	\begin{align}
		x^{(k+1)}_{k+2}-x^{(k+1)}_{k+1}&=\left(x^{(k+1)}_{k+2}-x^{(k+1)}_{k}\right)-\left(x^{(k+1)}_{k+1}-x^{(k+1)}_{k}\right) \notag\\
		&= \left(x^{(k)}_{k+2}-x^{(k)}_{k}\right)-D \notag\\
		&=\left(x^{(k)}_{k+2}-x^{(k)}_{k+1}\right)+\left(x^{(k)}_{k+1}-x^{(k)}_{k}\right)-D \notag\\
		&\geq D+D-D \notag\\
		&\geq D.
	\end{align}
	Moreover, since the other $(N-1)$ MAs' positions are fixed in the $(k+1)$th step, i.e., $x^{(k+1)}_{n} = x^{(k)}_{n}, n\in\mathcal{N}\setminus (k+1)$, we have $x^{(k+1)}_{n}-x^{(k+1)}_{n-1}\geq D, n=2,3,\ldots,N$. Since the antenna position adjustment procedure is symmetrical for $k=1,2,\ldots,\lfloor N/2 \rfloor$ and $k=\lfloor N/2 \rfloor+1,\lfloor N/2 \rfloor+2,\ldots,N$, we can show that constraints \eqref{P1b} and \eqref{P1c} hold for $k=\lfloor N/2 \rfloor+1,\lfloor N/2 \rfloor+2,\ldots,N$ in a similar manner. As such, the above induction ensures that $\bm{x}^{(k)}$ satisfies constraints \eqref{P1b} and \eqref{P1c} based on our designed position adjustment procedure.

	Next, we show that ${\rm{var}}(\bm{x}^{(k)})\geq {\rm{var}}(\bm{x}^{(k-1)})$ for $k=1,2,\ldots,N$. Define $\bar{\mu}(\bm{x},l)\triangleq\frac{1}{N-1}\sum_{n=1,n\neq l}^{N}x_n$. Accordingly, for $k=1,2,\ldots,\lfloor N/2 \rfloor$, we derive the difference between ${\rm{var}}(\bm{x}^{(k)})$ and ${\rm{var}}(\bm{x}^{(k-1)})$ as
	\begin{align}\label{varvar}
		&{\rm{var}}(\bm{x}^{(k)})-{\rm{var}}(\bm{x}^{(k-1)}) \\
		\overset{(b_1)}=&\frac{1}{N}\left(\left(x^{(k)}_k\right)^2-\left(x^{(k-1)}_k\right)^2\right)  - \frac{1}{N^2} \Big( \Big( (N-1)\bar{\mu}(\bm{x}^{(k)},k) \notag\\
		~~~~~~&+ x^{(k)}_k \Big)^2 	 - \left( (N-1)\bar{\mu}(\bm{x}^{(k-1)},k) + x^{(k-1)}_k \right)^2 \Big) \notag\\
		\overset{(b_2)}=&\frac{1}{N}\left(\left(x^{(k)}_k\right)^2-\left(x^{(k-1)}_k\right)^2\right) -  \frac{1}{N^2} \Big(\left(x^{(k)}_k\right)^2  \notag\\
		~~~~~~&- \left(x^{(k-1)}_k\right)^2  + 2(N-1)\bar{\mu}(\bm{x}^{(k)},k)\left(x^{(k)}_k-x^{(k-1)}_k\right)\Big) \notag\\
		=&\frac{N-1}{N^2}\left(x^{(k)}_k-x^{(k-1)}_k\right)\left( x^{(k)}_k+x^{(k-1)}_k - 2\bar{\mu}(\bm{x}^{(k)},k) \right),\notag
	\end{align}
	where the equality  $(b_1)$ holds since $x^{(k)}_{n} = x^{(k-1)}_{n}$ for $n\in\mathcal{N}\setminus k$ and $\mu(\bm{x}^{(k)}) = \frac{1}{N} \left( (N-1)\bar{\mu}(\bm{x}^{(k)},k) + x^{(k)}_k \right)$. The equality  $(b_2)$ holds because $\bar{\mu}(\bm{x}^{(k)},k)=\bar{\mu}(\bm{x}^{(k-1)},k)$. On one hand, since $x^{(k)}_{k} = \bm{x}^\star[k]=(k-1)D$ for $k=1,2,\ldots,\lfloor N/2 \rfloor$, we have
	\begin{align}\label{xkk1}
		x^{(k-1)}_k &= \sum_{n=2}^{k} \left( x^{(k-1)}_n-x^{(k-1)}_{n-1} \right) + x^{(k-1)}_1 \notag\\
		&\overset{(c)}\geq (k-1)D=x^{(k)}_{k},
	\end{align}
	where the inequality  $(c)$ holds since $\bm{x}^{(k-1)}$ satisfies constraints \eqref{P1b} and \eqref{P1c}. Therefore, we have $x^{(k)}_k-x^{(k-1)}_k\leq 0$. On the other hand, for $k=1,2,\ldots,\lfloor N/2 \rfloor$, $x^{(k)}_k+x^{(k-1)}_k - 2\bar{\mu}(\bm{x}^{(k)},k)$ can be further represented as
	\begin{align}\label{xxmu}
		&x^{(k)}_k+x^{(k-1)}_k - 2\bar{\mu}(\bm{x}^{(k)},k) \notag\\
		=&\frac{1}{N-1}\left( (N-1)\left( x^{(k)}_k+x^{(k-1)}_k \right) - 2\sum_{n=1,n\neq k}^{N}x^{(k)}_n \right) 
		\notag\\		
		=& \frac{1}{N-1}\Big( 2\sum_{n=1}^{k-1}   \left(x^{(k)}_k-x^{(k)}_n\right) + 2\sum_{n=k+1}^{N}  \left(x^{(k-1)}_k-x^{(k)}_n \right) \notag\\
		~~~~~~&+ ((N-1)-2(k-1))x^{(k)}_k \notag\\
		~~~~~~&+ ((N-1)-2(N-k))x^{(k-1)}_k \Big) \notag\\
		\overset{(d_1)}\leq& \frac{1}{N-1}\Bigg( 2\sum_{n=1}^{k-1}(k-n)D-2\sum_{n=k+1}^{N}(n-k)D   \notag\\
		~~~~~~&+ ((N-1)-2(k-1))x^{(k-1)}_k \notag\\
		~~~~~~&+ ((N-1)-2(N-k))x^{(k-1)}_k \Bigg) \notag\\
		=&\frac{1}{N-1}(k(k-1)-(N-k+1)(N-k))D \notag\\
		\overset{(d_2)}\leq&0,
	\end{align}
	where the inequality  $(d_1)$ holds since $x^{(k)}_k-x^{(k)}_n=(k-n)D$ for $k=1,2,\ldots,\lfloor N/2 \rfloor$ and $n=1,2,\ldots,k-1$, while $x^{(k-1)}_k-x^{(k)}_n=x_k-x_n\leq(k-n)D$ for $k=1,2,\ldots,\lfloor N/2 \rfloor$ and $n=k+1,k+2,\ldots,N$, and $x^{(k)}_{k}\leq x^{(k-1)}_k$ according to \eqref{xkk1}. The inequality  $(d_2)$ holds due to $k\leq N-k$ and $k-1\leq N-k+1$ for $k=1,2,\ldots,\lfloor N/2 \rfloor$.
	
	Substituting \eqref{xkk1} and \eqref{xxmu} into \eqref{varvar}, we can conclude that ${\rm{var}}(\bm{x}^{(k)})-{\rm{var}}(\bm{x}^{(k-1)})\geq0$ for $k=1,2,\ldots,\lfloor N/2 \rfloor$. Since the antenna position adjustment procedure is symmetrical over $k=1,2,\ldots,\lfloor N/2 \rfloor$ and $k=\lfloor N/2 \rfloor+1,\lfloor N/2 \rfloor+2,\ldots,N$, we can also show that ${\rm{var}}(\bm{x}^{(k)})-{\rm{var}}(\bm{x}^{(k-1)})\geq0$ for $k=\lfloor N/2 \rfloor+1,\lfloor N/2 \rfloor+2,\ldots,N$ in a similar manner. As such, for arbitrary $\bm{x}$ satisfying constraints \eqref{P1b} and \eqref{P1c}, we have ${\rm{var}}(\bm{x})= {\rm{var}}(\bm{x}^{(0)})\leq{\rm{var}}(\bm{x}^{(1)})\leq\ldots\leq{\rm{var}}(\bm{x}^{(N)})={\rm{var}}(\bm{x}^\star)$. Since ${\rm{var}}(\bm{x}^\star)\geq {\rm{var}}(\bm{x})$ holds for any feasible $\bm{x}$ that satisfies constraints \eqref{P1b} and \eqref{P1c},  $\bm{x}^\star$ is an optimal solution for problem (P1). 
	
	Furthermore, since $\bm{x}^\star=[0,D,2D,\ldots,({\lfloor N/2 \rfloor}-1)D, A-(N-{\lfloor N/2 \rfloor}-1)D, A-(N-{\lfloor N/2 \rfloor}-2)D, \ldots, A-D, A]^{\mathsf T}$ is a combination of two equal-distance vectors with 
	$\bar{N}\triangleq\lfloor N/2 \rfloor$ and $\tilde{N}\triangleq N - \lfloor N/2 \rfloor$ elements, respectively, ${\rm{var}}(\bm{x}^\star)$ can be further written as
	\begin{align}\label{PAND}
		&{\rm{var}}(\bm{x}^\star)=\frac{1}{N}\sum_{n=1}^{N}\left(x^\star_n\right)^2 - \mu(\bm{x}^\star)^2 \\
		&=\frac{1}{6N}\Big( \bar{N}(\bar{N}-1)(2\bar{N}-1) D^2  +  \tilde{N}(\tilde{N}-1)(2\tilde{N}-1)D^2 \notag\\
		&~~~~~ + 6\tilde{N}\tilde{A}A \Big) -\frac{1}{4N^2}\left((\bar{N}-1)\bar{N} D + (A+\tilde{A})\tilde{N}\right)^2 \notag\\
		&=\left\{
		\begin{array}{ll}
			\frac{1}{12} \big( 3A^2-3(N-2)DA & \\
			~~~~~~~~~~~~+ (N-2)(N-1)D^2 \big),    & N~\text{is even},\\
			\frac{(N-1)(N+1)}{12N^2} \big( 3A^2-3(N-2)DA & \\
			~~~~~~~~~~~~+ (N^2-3N+3)D^2 \big),  & N~\text{is odd},
		\end{array} \right. \notag\\
		&\triangleq p(A,N,D), \notag
	\end{align}
	where $\tilde{A}\triangleq A-(\tilde{N}-1)D$.
	Thus, the corresponding minimum CRB is given by
	\begin{align}\label{50}
		{\rm{CRB}}_u(\bm{x}^\star) =  \frac{\sigma^2\lambda^2}{8\pi^2TPN|\beta|^2}\frac{1}{p(A,N,D)}.
	\end{align}
	It can be observed from \eqref{PAND} that $p(A,N,D)$ is a quadratic function with respect to $A$ and $D$, and is a quadratic/monotonic quadratic function with respect to $N$ when $N$ is even/odd. Moreover, since it requires $A\geq (N-1)D$ such that the minimum distance constraint \eqref{P1c} can be satisfied, $p(A,N,D)$ increases with $A$ for $A\geq (N-1)D$ in the order of $\mathcal{O}(A^2)$ as $A\rightarrow\infty$, while it increases with $N$ for $2\leq N\leq A/D+1$ and decreases with $D$ for $0< D\leq A/(N-1)$. Since ${\rm{CRB}}_u(\bm{x}^\star)$ is inversely proportional to $p(A,N,D)$, we can conclude that ${\rm{CRB}}_u(\bm{x}^\star)$ decreases with $A$ for $A\geq (N-1)D$ in the order of $\mathcal{O}(A^{-2})$, while it decreases with $N$ for $2\leq N\leq A/D+1$ and increases with $D$ for $0\leq D\leq A/(N-1)$.	This thus completes the proof of Theorem 1.

	\subsection{Proof of Theorem 2}
	The objective function of problem (P2) can be further written as
	\begin{align}
		&\max_{\bm{x},\bm{y}}\min  ~ \left[{\rm{var}}(\bm{x})-\frac{{\rm{cov}}(\bm{x},\bm{y})^2}{{\rm{var}}(\bm{y})}, {\rm{var}}(\bm{y})-\frac{{\rm{cov}}(\bm{x},\bm{y})^2}{{\rm{var}}(\bm{x})}\right] \notag\\
		\leq& \max_{\bm{x},\bm{y}}\min  ~ \left[{\rm{var}}(\bm{x}), {\rm{var}}(\bm{y})\right] \notag\\
		\leq& \max_{\bm{x},\bm{y}} ~ \frac{1}{2}({\rm{var}}(\bm{x}) + {\rm{var}}(\bm{y})) \notag\\
		=& \max_{\bm{x},\bm{y}} ~\frac{1}{2}\left( \frac{1}{N}\sum_{n=1}^{N}x_n^2 - \mu(\bm{x})^2 + \frac{1}{N}\sum_{n=1}^{N}y_n^2 - \mu(\bm{y})^2 \right) \notag\\
		\leq& \max_{\bm{x},\bm{y}} ~\frac{1}{2}\left( \frac{1}{N}\sum_{n=1}^{N}x_n^2 + \frac{1}{N}\sum_{n=1}^{N}y_n^2 \right) \notag\\
		=& \max_{\bm{x},\bm{y}} ~\frac{1}{2N} \sum_{n=1}^{N}\left(x_n^2 + y_n^2\right)  \leq  \frac{A^2}{2}.
	\end{align}
	Thus, for the circular region $\mathcal{C}^\textrm{cir}$, the objective value of problem (P2) is upper-bounded by $A^2/2$, where the upper-bound is tight when
	\begin{subequations}\label{con}
		\begin{align}
			&{\rm{cov}}(\bm{x},\bm{y}) = 0, \label{con1}\\
			&{\rm{var}}(\bm{x}) = {\rm{var}}(\bm{y}),\label{con2}\\
			&\mu(\bm{x})=\mu(\bm{y})=0,\label{con3}\\
			&x_n^2 + y_n^2=A^2, n=1,2,\ldots,N.\label{con4}
		\end{align}
	\end{subequations}
	The constraint \eqref{con4} indicates that all MAs should be deployed on the border of the circular region $\mathcal{C}^\textrm{cir}$ to achieve the upper-bound of $\delta$. In order to ensure that the minimum distance for every antenna pair on the border of the circle is larger than $D$, i.e., satisfying constraint \eqref{P2e} and constraint \eqref{con4} simultaneously, it requires $D\leq 2A\sin(\pi/N)$.
	
	Due to the non-convexity of constraint \eqref{con}, it is challenging to design the MAs' positions that can achieve the upper-bound of $\delta$ for general $N\in\mathbb{Z}^{+}$. Nevertheless, we can construct the MAs' positions $\tilde{\bm{r}}^\textrm{cir}=\left[\bm{x}^\textrm{cir}, \bm{y}^\textrm{cir}\right]^{\mathsf T}\in \mathbb{R}^{2 \times N}$ that satisfy constraint \eqref{con} for $N=4K$, $K\in\mathbb{Z}^{+}$, shown as follows.
	
	\begin{figure}[!t]
		\centering
		\includegraphics[width=80mm]{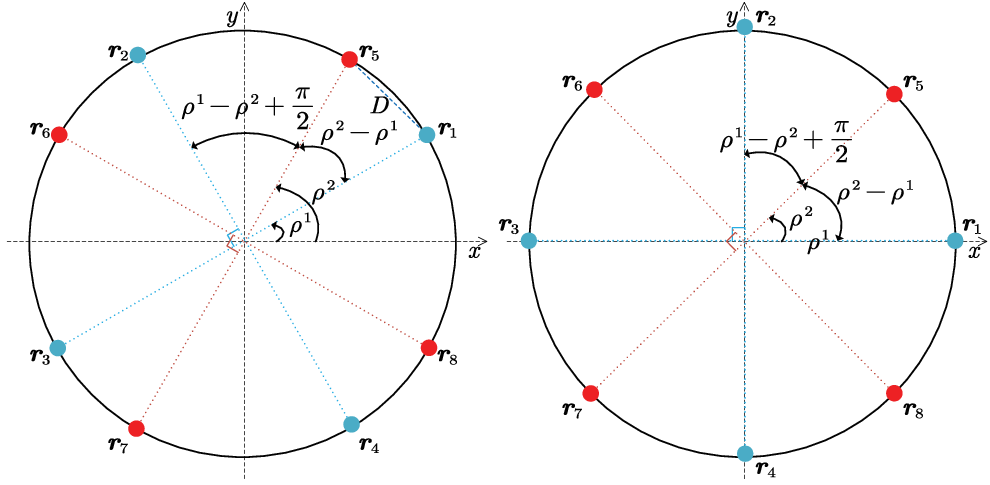}
		\caption{Illustration of the optimal MAs' positions for circular region $\mathcal{C}^\textrm{cir}$ with $\rho^1=\pi/6$, $\rho^2=\pi/3$ (left) or $\rho^1=0$, $\rho^2=\pi/4$ (right).}
		\label{optimal_position_circle_example}
	\end{figure}
	
	The MAs are first partitioned into $K$ groups, with each group containing four MAs. For the $k$th group ($k=1,2,\ldots,K$), denote the $x$-axis and $y$-axis APVs of the four MAs by $\bar{\bm{x}}^k=[\bar{x}^k_1,\bar{x}^k_2,\bar{x}^k_3,\bar{x}^k_4]^{\mathsf T}$ and $\bar{\bm{y}}^k=[\bar{y}^k_1,\bar{y}^k_2,\bar{y}^k_3,\bar{y}^k_4]^{\mathsf T}$, respectively. Consequently, $\bm{x}^\textrm{cir}$ and $\bm{y}^\textrm{cir}$ can be constructed as $\bm{x}^\textrm{cir}=\left[ (\bar{\bm{x}}^1)^{\mathsf T}, (\bar{\bm{x}}^2)^{\mathsf T}, \ldots, (\bar{\bm{x}}^K)^{\mathsf T} \right]^{\mathsf T}$ and $\bm{y}^\textrm{cir}=\left[ (\bar{\bm{y}}^1)^{\mathsf T}, (\bar{\bm{y}}^2)^{\mathsf T}, \ldots, (\bar{\bm{y}}^K)^{\mathsf T} \right]^{\mathsf T}$, respectively. As illustrated in Fig.~\ref{optimal_position_circle}, $\bar{\bm{x}}^k$ and $\bar{\bm{y}}^k$ are determined by the rotation angle $\rho^k$, which can be expressed as
	\begin{align}\label{xkyk}
		&\bar{x}^k_1 = A\cos(\rho^k), \bar{y}^k_1 = A\sin(\rho^k), \\
		&\bar{x}^k_2 = A\cos\left(\rho^k+\frac{\pi}{2}\right) = -\bar{y}^k_1, \bar{y}^k_2 = A\sin\left(\rho^k+\frac{\pi}{2}\right) = \bar{x}^k_1, \notag\\
		&\bar{x}^k_3 = A\cos\left(\rho^k+\pi\right)=-\bar{x}^k_1, \bar{y}^k_3 = A\sin\left(\rho^k+\pi\right)=-\bar{y}^k_1, \notag\\
		&\bar{x}^k_4 = A\cos\left(\rho^k+\frac{3\pi}{2}\right)=\bar{y}^k_1, \bar{y}^k_4 = A\sin\left(\rho^k+\frac{3\pi}{2}\right)=-\bar{x}^k_1.\notag
	\end{align}
	By substituting \eqref{xkyk} into \eqref{con}, we can verify that $\bar{\bm{x}}^k$ and $\bar{\bm{y}}^k$ satisfy constraint \eqref{con}. Thus, we have
	\begin{align}\label{conn}
		&\mu(\bm{x}^\textrm{cir})=\frac{4}{N}\sum_{k=1}^{K}\mu(\bar{\bm{x}}^k) = 0, \notag\\
		&\mu(\bm{y}^\textrm{cir})=\frac{4}{N}\sum_{k=1}^{K}\mu(\bar{\bm{y}}^k) = 0, \notag\\
		&{\rm{var}}(\bm{x}^\textrm{cir}) = \frac{4}{N}\sum_{k=1}^{K}{\rm{var}}(\bar{\bm{x}}^k) = \frac{4}{N}\sum_{k=1}^{K}{\rm{var}}(\bar{\bm{y}}^k) = {\rm{var}}(\bm{y}^\textrm{cir}), \notag\\
		&{\rm{cov}}(\bm{x}^\textrm{cir},\bm{y}^\textrm{cir}) = \frac{4}{N}\sum_{k=1}^{K}{\rm{cov}}(\bar{\bm{x}}^k,\bar{\bm{y}}^k) = 0.
	\end{align}
	Thus, the MAs' positions $\tilde{\bm{r}}^\textrm{cir}=\left[\bm{x}^\textrm{cir}, \bm{y}^\textrm{cir}\right]^{\mathsf T}$ satisfy constraint \eqref{con}, enabling them to achieve the upper-bound of the objective value of problem (P2). Without loss of generality, we assume
	\begin{align}\label{rho1}
		0\leq\rho^1<\rho^2<\ldots<\rho^K\leq\frac{\pi}{2},
	\end{align}
	such that $[\bar{x}^k_1,\bar{y}^k_1]^{\mathsf T}$ lies in the first quadrant of the $x$-$y$ coordinate system. Define $\bar{\bm{r}}^k_i\triangleq[\bar{x}^k_i,\bar{y}^k_i]^{\mathsf T}$, $i\in\mathcal{K}\triangleq\{1,2,3,4\}$. Furthermore, to satisfy the minimum distance constraint \eqref{P2e} of problem (P2), we need to ensure
	\begin{align}\label{rho2}
		\left\{
		\begin{array}{ll}
			\rho^{k}-\rho^{k-1}\geq2\arcsin\left(\frac{D}{2A}\right), ~k=2,3,\ldots,K,\\
			\rho^{1}-\rho^{K}+\frac{\pi}{2}\geq2\arcsin\left(\frac{D}{2A}\right),
		\end{array} \right.
	\end{align}
	such that $\|\bar{\bm{r}}^k_i-\bar{\bm{r}}^{k-1}_i\|_2 \geq D$ for $i\in\mathcal{K}$ and  $\|\bar{\bm{r}}^1_{i+1}-\bar{\bm{r}}^K_i\|_2 \geq D$ for $i\in\mathcal{K} \setminus 4$. Consequently, $\tilde{\bm{r}}^\textrm{cir}$ represents the optimal solution for problem (P2) in the circular region $\mathcal{C}^\textrm{cir}$. Notice that $\{\rho^k\}_{k=1}^K$ is not unique. For example, as shown in Fig.~\ref{optimal_position_circle_example}, we consider $N=8$, $K=2$, and $D=2A\sin(\pi/12)$ such that $\rho^{2}-\rho^{1}\geq \pi/6$ and $\rho^{1}-\rho^{2}+\pi/2\geq \pi/6$. Then, two possible solutions for $\{\rho^k\}_{k=1}^2$ are $\rho^1=\pi/6$, $\rho^2=\pi/3$ or $\rho^1=0$, $\rho^2=\pi/4$. This thus completes the proof of Theorem 2.

	\bibliographystyle{IEEEtran}
	\bibliography{IEEEabrv,IEEEexample}

\end{document}